\documentclass[11pt]{article}

\usepackage[utf8]{inputenc}
\usepackage{mathtools}
\usepackage{xparse}
\usepackage{todonotes}
\usepackage[numbers,sort]{natbib}
\usepackage{algorithm}
\usepackage{algpseudocode}
\usepackage[pagebackref]{hyperref}
\usepackage{tabularx}
\usepackage{amsthm}
\usepackage{amsmath}
\usepackage{amssymb}
\usepackage{amsfonts}
\usepackage[capitalize]{cleveref}

\usepackage{fullpage}

\usepackage{authblk}

\usetikzlibrary{arrows,decorations.pathmorphing,decorations.pathreplacing,backgrounds,positioning,fit,matrix}
\usetikzlibrary{shapes,calc}

\usepackage{uri}
\bibliographystyle{plainnat}

\newtheorem{theorem}{Theorem}
\newtheorem{lemma}[theorem]{Lemma}
\newtheorem{observation}[theorem]{Observation}
\newtheorem{corollary}[theorem]{Corollary}
\theoremstyle{definition}
\newtheorem{definition}[theorem]{Definition}

\Crefname{observation}{\text{Observation}}{\text{Observations}}

\newcommand{\TE}{\mathcal{E}}
\newcommand{\TG}{\mathcal{G}}

\newcommand{\FES}{\textsc{Feedback Edge Set}}

\newcommand{\bigO}{{O}}

\newcommand{\Ss}{\mathcal{S}}

\newcommand{\eps}{\varepsilon}

\DeclareMathOperator{\hte}{h}

\DeclareMathOperator{\absvupdate}{nrd}
\DeclareMathOperator{\s_absvupdate}{srd}

\DeclareMathOperator{\poly}{poly}
\DeclareMathOperator{\tw}{tw}

\newcommand{\STFES}{\textsc{STFES}}
\newcommand{\STFCS}{\textsc{STFCS}}

\newcommand{\tdunder}{\operatorname{td}_\downarrow}
\newcommand{\twunder}{\operatorname{tw}_\downarrow}

\DeclarePairedDelimiterX{\set}[1]{\{ }{ \} }{\setargs{#1}}
\NewDocumentCommand{\setargs}{>{\SplitArgument{1}{;}}m}
{\setargsaux#1}
\NewDocumentCommand{\setargsaux}{mm}
{\IfNoValueTF{#2}{#1} {#1\,\delimsize|\,\mathopen{}#2}}%

\DeclarePairedDelimiterX{\abs}[1]{\lvert}{\rvert}{#1}
\DeclarePairedDelimiterX{\ceil}[1]{\lceil}{\rceil}{#1}
\DeclarePairedDelimiterX{\floor}[1]{\lfloor}{\rfloor}{#1}
\DeclarePairedDelimiterX{\norm}[1]{\lVert}{\rVert}{#1}

\DeclareMathOperator{\T}{T}

\newcommand{\problemdef}[3]{
		\begin{center}
	\begin{minipage}{0.95\textwidth}
		\noindent
		\textsc{#1}
				\vspace{5pt}\\
				\setlength{\tabcolsep}{3pt}
				\begin{tabularx}{\textwidth}{@{}lX@{}}
						\textbf{Input:} 		& #2 \\
						\textbf{Question:} 	& #3
					\end{tabularx}
	\end{minipage}
		\end{center}
}

\begin{document}

\title{Feedback Edge Sets in Temporal Graphs\thanks{%
An extended abstract of this work appeared in the proceedings of
WG 2020 \cite{HMNR2020}. This version provides full proof details and corrects some errors.
Supported by the DFG, project MATE (NI 369/17).
}}

\author{Roman~Haag}
\author{Hendrik~Molter}
\author{Rolf~Niedermeier}
\author{Malte~Renken}

\affil{Algorithmics and Computational Complexity, Faculty~IV, TU Berlin, Germany\\
\{h.molter,rolf.niedermeier,m.renken\}@tu-berlin.de}

\date{ }

\maketitle              %

\begin{abstract}
The classical, linear-time solvable \FES{} problem is concerned 
with finding a minimum number of edges
intersecting all cycles in a (static, unweighted) graph.  
We provide a first study of this problem in the setting 
of \emph{temporal graphs},
where edges are present only at certain points in time.
We find that there are four natural generalizations of \FES{},
all of which turn out to be NP-hard. %
We also study the tractability of these problems with respect to several parameters
(solution size, lifetime, and number of graph vertices, among others)
and obtain some parameterized hardness 
but also fixed-parameter tractability results.
\end{abstract}

\section{Introduction}
\label{sec:intro}
A temporal graph $\TG=(V, \TE, \tau)$ has a fixed vertex set $V$ and each time-edge in~$\TE$ has a discrete time-label $t\in \{1,2, \dots, \tau \}$, where $\tau$ denotes the \emph{lifetime} of the temporal graph $\TG$.
A \emph{temporal cycle} in a temporal graph is a cycle of time-edges with increasing time-labels.
We study the computational complexity of searching for small 
\emph{feedback edge sets}, i.e., edge sets whose removal from the temporal graph destroys all temporal cycles.
We distinguish between the following two variants of feedback edge sets.
 
\begin{enumerate}
\item
\emph{Temporal feedback edge sets,} which consist of time-edges, that is, connections between two specific vertices at a specific point in time. %

\item
\emph{Temporal feedback connection sets,} which consist of vertex pairs $\{v,w\}$ causing that all time-edges between $v$ and $w$ will be removed. %
\end{enumerate}

Defining feedback edge set problems in temporal graphs is not straight-forward
because for temporal graphs the notions of paths and cycles are more involved than for static graphs. 
First, we consider two different, established 
models of temporal paths.
Temporal paths are time-respecting paths in a temporal graph.
\emph{Strict} temporal paths have strictly increasing time-labels on consecutive time-edges. %
\emph{Non-strict} temporal paths have non-decreasing time-labels on consecutive time-edges.

We focus on finding temporal feedback edge sets and temporal feedback connection sets (formalized in \Cref{sec:preliminaries}) of small cardinality in unweighted temporal graphs, each time using both the strict and non-strict temporal cycle model. We call the corresponding problems \textsc{(Strict) Temporal Feedback Edge Set} and \textsc{(Strict) Temporal Feedback Connection Set}, respectively. 

\problemdef{(Strict) Temporal Feedback Edge Set ((S)TFES)}
	{A temporal graph $\TG=(V,\TE,\tau)$ and $k \in \mathbb{N}$.}
	{Is there a (strict) temporal feedback edge set $\TE'\subseteq\TE$ of~$\TG$ with $\abs{\TE'} \leq k$?}

\problemdef{(Strict) Temporal Feedback Connection Set ((S)TFCS)}
	{A temporal graph $\TG=(V,\TE,\tau)$ and $k \in \mathbb{N}$.}
	{Is there a (strict) temporal feedback connection set~$C' \subseteq \binom{V}{2}$ of~$\TG$ with $\abs{C'} \leq~k$?}

\paragraph{Related Work.}
In static connected graphs, removing a minimum-cardinality feedback edge set 
results in a spanning tree.
This can be done in linear time via depth-first or breadth-first search.
Thus, it is natural to compare temporal feedback edge sets to the temporal analogue of a spanning tree.
This analogue is known as the \emph{minimum temporally connected (sub)graph}, which is a graph containing a time-respecting path from each vertex to every other vertex.
The concept was first introduced by \citet{KempeKK02},
and \citet{AxiotisF16} showed that in an $n$-vertex graph such a minimum
temporally connected subgraph can have $\Omega(n^2)$ edges while
\citet{CasteigtsPS19} showed that complete temporal graphs admit sparse
temporally connected subgraphs.
Additionally, \citet{AkridaGMS15} and \citet{AxiotisF16} proved that computing a minimum temporally connected subgraph is APX-hard. Considering \emph{weighted} 
temporal
graphs, there is also (partially empirical) work on computing minimum 
spanning trees,
mostly focusing on polynomial-time approximability~\cite{HFL15}.

While feedback edge sets in temporal graphs seemingly have not been 
studied before,
\citet{Agrawal16} investigated the related problem $\alpha-$\textsc{Simultaneous Feedback Edge Set}, where the edge set of a graph is partitioned into $\alpha$ color classes
and one wants to find a set of at most~$k$ edges intersecting all monochromatic cycles.
They show that this is NP-hard for $\alpha \geq 3$ colors and give a $2^{\bigO(k\alpha)} \poly(n)$-time algorithm.

Another related problem is finding $s$-$t$-separators in temporal graphs; this was studied by \citet{Berman96}, \citet{KempeKK02}, and \citet{ZschocheFMN18}.
Already here some differences were found between the strict %
and the non-strict setting,
a distinction that also matters for our results.

\paragraph{Our Contributions.}
Based on a polynomial-time many-one reduction from 3-SAT, we show NP-hardness for all four problem variants.
The properties of the corresponding construction yield more insights concerning special cases.
More specifically, the constructed graph uses $\tau=8$ distinct time-labels for the strict variants  and $\tau=3$ labels for the non-strict variants.
Similarly, we observe that our constructed graph has at most one time-edge between any pair of vertices (i.e., is \emph{simple}),
implying that the problems remain NP-hard when restricted to simple temporal graphs.
Assuming the Exponential Time Hypothesis, we can additionally prove that there is no subexponential-time algorithm solving (S)TFES or (S)TFCS.
Moreover, we show that all four problem variants are W[1]-hard when parameterized by the solution size~$k$, using a parameterized reduction from the W[1]-hard problem \textsc{Multicut in DAGs}~\cite{KratschPPW15}. 

\begin{table}[t]
\caption{Overview of our results for \textsc{(Strict) Temporal Feedback Edge Set} (marked with~*) and \textsc{(Strict) Temporal Feedback Connection Set} (marked with~**). Unmarked results apply to both variants. The parameter~$k$ denotes the solution size, $\tau$ the lifetime of the temporal graph, $L$ the maximum length of a minimal temporal cycle, and $\twunder{}$ resp.\ $\tdunder{}$ the treewidth resp.\ treedepth of the underlying graph.}
\label{tab:Results}
\Crefname{theorem}{\text{Thm.}}{\text{Thms.}}
\Crefname{corollary}{\text{Cor.}}{\text{Corss.}}
\Crefname{observation}{\text{Obs.}}{\text{Obss.}}
\centering
\begin{tabular}{l@{\hspace{0.5cm}}>{\raggedright\arraybackslash}p{6cm}@{\hspace{.5cm}}>{\raggedright\arraybackslash}p{7cm}}
\hline \\[-1em]
Param. &\multicolumn{2}{c}{Complexity}\\
\hline \\[-1em]
 &Strict variant &Non-strict variant \\
 \cline{2-3} \\[-0.5em]
none &NP-hard [\Cref{thm:NP-hardness}*/\Cref{cor:NP-hardness_SAT_STFCS}**] &NP-hard [\Cref{cor:NP-hardness_SAT_TFES_TFCS}] \\[0.5em]
$k$ &W[1]-hard [\Cref{thm:STFES_k_W1}] &W[1]-hard [\Cref{thm:STFES_k_W1}]\\[0.5em]
$\tau$ &$\tau \geq 8$: NP-h.\ [\Cref{thm:NP-hardness}*/\Cref{cor:NP-hardness_SAT_STFCS}**]&$\tau \geq 3$: NP-h.\ %
[\Cref{cor:NP-hardness_SAT_TFES_TFCS}] \\[0.5em]
$k + L$ &$\bigO(L^k \cdot \abs{\TE}^2)$ [\Cref{thm:KandCycleLengthAlg}] 
&$\bigO(L^k \cdot  \abs{\TE}^2 \log\abs{\TE})$ [\Cref{thm:KandCycleLengthAlg}]\\[0.5em]
$k + \tau$ &$\bigO(\tau^k \cdot \abs{\TE}^2)$ [\Cref{cor:StrictTautotheK}]
&open\\[0.5em]
$k + \tdunder{}$ & $2^{\bigO(\tdunder{} \cdot k)}  \cdot \abs{\TE}^2$ [\Cref{cor:treedepth}]
& $2^{\bigO(\tdunder{} \cdot k)}  \cdot \abs{\TE}^2 \log\abs{\TE}$ [\Cref{cor:treedepth}] \\[0.5em]
$|V|$ &$\bigO(2^{\abs{V}^3}\abs{V}^4 \tau)$* [\Cref{thm:FPTV}*]
$\bigO(2^{\frac{1}{2}(\abs{V}^2-\abs{V})} \cdot \abs{\TE}^2)$** [\Cref{obs:tfcs_v_fpt}**]
&$\bigO(2^{\abs{V}^3 + \abs{V}^2}\abs{V}^7\tau)$* [\Cref{thm:FPTV}*]
$\bigO(2^{\frac{1}{2}(\abs{V}^2-\abs{V})} \cdot \abs{\TE}^2 \log\abs{\TE})$** [\Cref{obs:tfcs_v_fpt}**]\\[0.5em]
$\twunder{} + \tau$ & FPT [\Cref{thm:mso}] & FPT [\Cref{thm:mso}] \\[0.5em]
\hline
\end{tabular}
\end{table}

On the positive side, based on a simple search tree, we first observe that all problem variants are fixed-parameter tractable with respect to the combined parameter~$k+L$, where~$L$ is the maximum length of a \emph{minimal} temporal cycle.
For the strict problem variants, this also implies
fixed-parameter tractability for the combined parameter~$\tau + k$.
Our main algorithmic result is to prove fixed-parameter tractability for (S)TFES 
with respect to the number of vertices~$\abs{V}$.
(For (S)TFCS, the corresponding result is straightforward as there are $\frac{1}{2}(\abs{V}^2-\abs{V})$ vertex pairs to consider.)
Finally, studying the combined parameter $\tau$ plus treewidth of the underlying graph, we show fixed-parameter tractability based on an MSO formulation.

Our results are summarized in \Cref{tab:Results}.
Notable distinctions between the different settings include the combined parameter~$k+\tau$ where the non-strict case remains open,
and the parameter~$\abs{V}$ where the proof for (S)TFES is much more involved than for (S)TFCS.

\section{Preliminaries and Basic Observations}
\label{sec:preliminaries}
We assume familiarity with standard notions from graph theory and 
(parameterized) complexity theory. %
We denote the set of positive integers with $\mathbb{N}$.
For $a \in \mathbb{N}$, we set $[a] := \{1, \dots, a\}$.
The notation~$\binom{A}{2}$ refers to all size-2 subsets of a set~$A$.
We use the following definition of temporal graphs in which the vertex set does not change with time and each time-edge has a discrete time-label~\cite{HS13,HS19,Mic16}.

\begin{definition}[Temporal Graph, Underlying Graph] An (undirected) \emph{temporal graph} $\TG=(V,\TE,\tau)$ is an ordered triple consisting of a set~$V$ of vertices, a set $\TE \subseteq {V \choose 2} \times [\tau]$ of (undirected) \emph{time-edges}, and a lifetime  $\tau \in \mathbb{N}$. 

The \emph{underlying graph} $G_\downarrow$ is the static graph obtained by removing all time-labels from $\TG$ and keeping only one edge from every multi-edge. We call a temporal graph \emph{simple} if each vertex pair is connected by at most one time-edge.
\label{def:TemporalGraph}
\end{definition}

Let $\TG=(V,\TE, \tau)$ be a temporal graph. For $i \in [\tau]$, let $E_i(\TG) :=\{ \{v,w\} \mid (\{v,w\}, i) \in \TE\}$ be the set of edges with time-label~$i$. We call the static graph $G_i(\TG) = (V,E_i(\TG)$) \emph{layer i} of $\TG$. For $t \in [\tau]$, we denote the temporal subgraph consisting of the first $t$ layers of $\TG$ by $G_{[t]}(\TG) := (V, \{ (e, i) \mid i \in [t] \land e \in E_i(\TG)\}, t)$. We omit the function parameter~$\TG$ if it is clear from the context. For some $\TE'\subseteq {V \choose 2} \times [\tau]$, we denote $\TG-\TE':=(V,\TE\setminus \TE', \tau)$.

\begin{definition}[Temporal Walk, Path, and Cycle]
	Given a temporal graph $\TG=(V,\TE,\tau)$, a \emph{temporal walk} of length $\ell$ in $\TG$ is a sequence
	$P=(e_1, e_2, \dots, e_\ell)$ of time-edges $e_i=(\{v_i,v_{i+1}\},t_i) \in \TE$
	where $t_i \leq t_{i+1}$ for all $i \in [\ell - 1]$.
	If $v_1, \dots, v_{\ell+1}$ are pairwise distinct, then $P$ is called a \emph{temporal path}.
	If $v_1, \dots, v_\ell$ are pairwise distinct, $v_{\ell+1} = v_1$, and $\ell \geq 3$, then $P$~is called a \emph{temporal cycle}.
	A temporal walk, path, or cycle is called \emph{strict} if $t_i < t_{i+1}$ for all $i \in [\ell - 1]$. 
\label{def:TemporalPath}
\end{definition}

The definitions of \textsc{(Strict) Temporal Feedback Edge Set} and \textsc{(Strict) Temporal Feedback Connection Set} (see \Cref{sec:intro}) are based on the following two sets (problem and set names are identical).

\begin{definition}[(Strict) Temporal Feedback Edge Set]
Let $\TG=(V,\TE,\tau)$ be a temporal graph.
A time-edge set $\TE' \subseteq \TE$ is called a \emph{(strict) temporal feedback edge set} of $\TG$ if $\TG'=(V,\TE \setminus \TE',\tau)$ does not contain a (strict) temporal cycle.
\label{def:TemporalFES}
\end{definition}

\begin{definition}[(Strict) Temporal Feedback Connection Set]
Let $\TG=(V,\TE,\tau)$ be a temporal graph with underlying graph $G_\downarrow=(V,E_\downarrow)$.
An edge set $C' \subseteq E_\downarrow$ is a \emph{(strict) temporal feedback connection set} of $\TG$ if $\TG'=(V,\TE',\tau)$ with $\TE'= \{(\{v,u\},t) \in \TE \mid \{v,u\} \notin C' \}$ does not contain a (strict) temporal cycle.
\label{def:TemporalFCS}
\end{definition}

The elements in a feedback connection set are known as \emph{underlying edges} (edges of $G_\downarrow$).

\paragraph{Simple Observations.}
For any given starting time and source vertex, we can compute shortest temporal paths to all other vertices in 
$\bigO(\abs{\TE} \log\abs{\TE})$ time \cite{WuCKHHW16}, respectively $\bigO(\abs{\TE})$ time for strict temporal paths \cite[Prop.~3.7]{ZschocheFMN18}.
Thus, by searching for each time-edge $(\{v, w\}, t)$ for a shortest temporal path from $w$ to $v$ (or vice versa) which starts at time $t$ and avoids the edge $\{v, w\}$,
we can record the following observation.

\begin{observation}\label{lem:shortestTour}
In $\bigO(\abs{\TE}^2 \log\abs{\TE})$ time,
we can find a shortest temporal cycle or confirm that none exists.
For the strict case $\bigO(\abs{\TE}^2)$ time suffices.
\end{observation}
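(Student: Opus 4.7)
The plan is to follow the approach outlined in the paragraph immediately preceding the observation. I would iterate over every time-edge $e = (\{v,w\}, t) \in \TE$, interpret it as the candidate edge of smallest time-label of a temporal cycle, and run a single-source shortest temporal path computation from source~$w$ on the temporal graph obtained from~$\TG$ by deleting the underlying edge $\{v,w\}$ (i.e., every time-edge on that vertex pair) and restricting attention to time-labels~$\geq t$ in the non-strict case, or $>t$ in the strict case. If this yields a temporal path~$P$ ending at~$v$, then concatenating $P$ with $e$ produces a temporal cycle of length $\abs{P} + 1$. Over all $\abs{\TE}$ choices of $e$, I would output the shortest such cycle found; if no iteration succeeds, I report that no temporal cycle exists.

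For correctness, let $C = (e_1,\ldots,e_\ell)$ be a shortest temporal cycle of~$\TG$ and write $e_1 = (\{v,w\},t_1)$ oriented as $v\to w$. By \Cref{def:TemporalPath}, the vertices $v_1 = v$, $v_2 = w$, $v_3,\ldots,v_\ell$ are pairwise distinct and $v_{\ell+1} = v_1 = v$, which in particular implies that no edge~$e_j$ with $j \geq 2$ has both endpoints in~$\{v,w\}$. Hence $(e_2,\ldots,e_\ell)$ is a temporal path from~$w$ to~$v$ whose edges avoid the underlying edge $\{v,w\}$ and whose time-labels satisfy the appropriate (non-)strict constraint relative to~$t_1$. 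The iteration with $e := e_1$ therefore finds a $w$-to-$v$ temporal path of length at most $\ell - 1$, yielding a cycle of length at most~$\ell$, which matches the optimum. Conversely, each cycle produced by the algorithm is valid by construction, and excluding the whole underlying edge $\{v,w\}$ guarantees that the returned cycle has length at least three, as required by the definition of a temporal cycle.

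The running-time bound follows by plugging in the cited single-source shortest-temporal-path algorithms: in the non-strict setting each call runs in $\bigO(\abs{\TE} \log \abs{\TE})$ time, while in the strict setting each call runs in $\bigO(\abs{\TE})$ time. Deleting a single underlying edge and imposing a starting-time restriction are trivial preprocessing steps with no asymptotic cost. Multiplying by the $\abs{\TE}$ outer iterations gives the claimed bounds of $\bigO(\abs{\TE}^2 \log \abs{\TE})$ and $\bigO(\abs{\TE}^2)$, respectively.

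The only mild obstacle is verifying that the cited shortest-path routines support the required hop-count metric together with a ``use only edges of time-label at least~$t$'' restriction; both cited papers provide exactly this functionality, so no further modification is required and the claimed time bounds follow directly.
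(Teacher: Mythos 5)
Your proposal is correct and follows essentially the same approach as the paper: for each time-edge $(\{v,w\},t)$, run a single-source shortest temporal path computation from $w$ back to $v$ starting at time $t$ while avoiding the connection $\{v,w\}$, then take the best cycle found over all $\abs{\TE}$ choices. Your added correctness details (distinctness of cycle vertices ensuring no later edge lies on $\{v,w\}$, and the length-at-least-three guarantee) are sound elaborations of what the paper leaves implicit.
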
 

Given a shortest temporal cycle of length $L$, any temporal feedback edge or connection set
must contain an edge or connection used by that cycle.
By repeatedly searching for a shortest temporal cycle and then branching over all of its edges or connections,
we obtain the following (again the log-factor is only required in the non-strict case).

\begin{observation}\label{thm:KandCycleLengthAlg}
Let $\TG=(V,\TE,\tau)$ be a temporal graph where each temporal cycle has length at most $L \in \mathbb{N}$.
Then, (S)TFES and (S)TFCS can be solved in $\bigO(L^k \cdot \abs{\TE}^2 \log\abs{\TE})$ time.
For the strict cases $\bigO(L^k \cdot \abs{\TE}^2)$ time suffices.
\end{observation}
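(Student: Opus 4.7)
The plan is to give a standard bounded search tree algorithm, using Observation~\ref{lem:shortestTour} as the subroutine at each node. The key correctness observation is that if $\TE'$ is a (strict) temporal feedback edge set of $\TG$ and $C$ is any (strict) temporal cycle in $\TG$, then $\TE'$ must contain at least one time-edge of $C$; likewise, any (strict) temporal feedback connection set must contain at least one underlying edge of $C$. Hence, once we have a concrete (strict) temporal cycle $C$ in hand, we know that any solution of size at most $k$ makes a choice among the at most $L$ time-edges (resp.\ underlying edges) of $C$.

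Concretely, I would describe a recursive procedure $\textsc{Solve}(\TG,k)$ that works as follows. First it invokes the subroutine from Observation~\ref{lem:shortestTour} on $\TG$ to find a shortest (strict) temporal cycle $C$ or conclude that none exists. If no cycle is found, then $\emptyset$ is already a valid (strict) temporal feedback edge/connection set, and we return \textbf{yes}. If a cycle is found but $k=0$, we return \textbf{no}. Otherwise, $C$ has length $\ell \le L$, and we branch over its $\ell$ time-edges (for the (S)TFES variants) or over its at most $\ell$ underlying edges (for the (S)TFCS variants): for each such edge/connection $x$ we recursively call $\textsc{Solve}(\TG - x,\, k-1)$, where by slight abuse of notation $\TG - x$ denotes the removal of the chosen time-edge or of all time-edges sharing the underlying edge $x$. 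We return \textbf{yes} iff some branch returns \textbf{yes}.

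For correctness, if the recursion returns \textbf{yes}, the selected edges/connections clearly form a feedback set since the final graph contains no temporal cycle. Conversely, if a solution of size at most $k$ exists, then by the remark above it must contain one of the at most $L$ choices offered by $C$; removing that element and continuing inductively yields a \textbf{yes}-branch. The search tree has depth at most $k$ and branching factor at most $L$, so it has at most $L^k$ leaves and $\bigO(L^k)$ nodes. At each node we pay $\bigO(\abs{\TE}^2 \log \abs{\TE})$ time for the shortest-cycle computation (and $\bigO(\abs{\TE}^2)$ in the strict case, where the $\log$-factor from~\cite{WuCKHHW16} is unnecessary), which yields the claimed total running time of $\bigO(L^k \cdot \abs{\TE}^2 \log \abs{\TE})$.

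There is no real obstacle here; the only small point to be careful about is the definition of the recursive subproblem in the connection-set variants, where deleting an underlying edge means deleting all time-edges carrying that label of endpoints, and that a (strict) temporal cycle of length $\ell$ uses exactly $\ell$ time-edges and at most $\ell$ distinct underlying edges, so the branching factor bound of $L$ is valid in all four variants.
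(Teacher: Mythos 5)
Your proposal is correct and follows essentially the same route as the paper's own proof: a depth-$k$ search tree that repeatedly finds a shortest (strict) temporal cycle via \Cref{lem:shortestTour} and branches over its at most $L$ time-edges (respectively underlying edges), for $\bigO(L^k)$ nodes and $\bigO(\abs{\TE}^2\log\abs{\TE})$ work per node. The only detail the paper makes explicit that you leave implicit is that removing time-edges cannot create new temporal cycles, so $L$ remains a valid bound on cycle length in every subinstance; this is immediate, so your argument is complete.
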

\begin{proof}
We can construct a simple search tree based on the fact that at least one edge from each cycle has to be in the solution.
According to \Cref{lem:shortestTour}, we can confirm that~$\TG$ is cycle-free or find some shortest cycle~$C$ in~$\bigO( \abs{\TE}^2 \cdot \log\abs{\TE})$ time
(resp.\ $\bigO(\abs{\TE}^2)$ in the strict case).
If we find a cycle~$C$, then we branch over all of its $\abs{C} \leq L$ time-edges and recursively solve the instance~$\mathcal{I}'$ remaining after removing this time-edge (underlying edge for (S)TFCS) and lowering $k$ by one.
Clearly, removing any time-edge cannot create a new temporal cycle and, thus,~$L$ is also an upper-bound for the length of a minimal temporal cycle in~$\mathcal{I}'$.
The size of the resulting search tree is upper-bounded by~$\bigO(L^k)$.  
\end{proof}
Clearly, a strict temporal cycle cannot be longer than the lifetime $\tau$.
Thus, \Cref{thm:KandCycleLengthAlg} immediately gives the following result.
\begin{corollary}
\label{cor:StrictTautotheK}
STFES and STFCS can be solved in $\bigO(\tau^k \cdot \abs{\TE}^2)$ time.
\end{corollary}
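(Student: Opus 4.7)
The plan is to invoke \Cref{thm:KandCycleLengthAlg} with $L=\tau$, so the main work is just to justify why $\tau$ is a valid upper bound on the length of a minimal (in fact, any) strict temporal cycle. First I would observe that in the strict setting the time-labels along a temporal cycle $(e_1,\dots,e_\ell)$ satisfy $t_1 < t_2 < \cdots < t_\ell$ by \Cref{def:TemporalPath}. Since each $t_i \in [\tau]$, a strictly increasing sequence in $[\tau]$ can have length at most $\tau$, hence every strict temporal cycle has length at most $\tau$. In particular, every minimal strict temporal cycle has length at most $\tau$, so the hypothesis of \Cref{thm:KandCycleLengthAlg} is satisfied with $L := \tau$.

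Second, I would apply \Cref{thm:KandCycleLengthAlg}: the branching algorithm repeatedly finds a shortest remaining strict temporal cycle and branches over its at most $\tau$ time-edges (resp.\ underlying edges for \STFCS{}), yielding a search tree of size at most $\tau^k$. The log factor can be dropped here because the parenthetical remark in \Cref{lem:shortestTour} gives an $\mathcal{O}(\abs{\TE}^2)$ algorithm for finding a shortest \emph{strict} temporal cycle, which is what the inner loop of the search tree calls. Substituting this bound into the analysis of \Cref{thm:KandCycleLengthAlg} gives the claimed $\mathcal{O}(\tau^k \cdot \abs{\TE}^2)$ running time.

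There is essentially no obstacle; the only subtlety worth mentioning in the writeup is that removing a time-edge (or an underlying connection) from $\TG$ cannot create any new cycles, so the bound $L \le \tau$ propagates through all recursive subinstances -- but this is already observed inside the proof of \Cref{thm:KandCycleLengthAlg}, so one sentence citing the corresponding observation suffices.
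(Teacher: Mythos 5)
Your proposal is correct and matches the paper's argument exactly: the paper likewise derives \Cref{cor:StrictTautotheK} by noting that a strict temporal cycle cannot be longer than $\tau$ (so $L \le \tau$ in \Cref{thm:KandCycleLengthAlg}) and dropping the logarithmic factor via the strict-case bound in \Cref{lem:shortestTour}. Your additional remark that edge removal cannot create new cycles is also already handled inside the proof of \Cref{thm:KandCycleLengthAlg}, just as you say.
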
 
Alternatively, we can also upper-bound $L$ in terms of the length of any cycle of the underlying graph $G_\downarrow$, which in turn can be upper-bounded by~$2^{\bigO(\tdunder{})}$~\cite[Prop.~6.2]{ND12}, 
where $\tdunder{}$ is the treedepth of the underlying graph.
\begin{corollary}\label{cor:treedepth}
\label{cor:VC}
Let $\TG$ be a temporal graph and $\tdunder{}$ be the treedepth of $G_\downarrow$.
Then, (S)TFES and (S)TFCS can be solved in $2^{\bigO(\tdunder{} \cdot k)}  \cdot \abs{\TE}^2 \log\abs{\TE}$ time.
For the strict cases $2^{\bigO(\tdunder{} \cdot k)}  \cdot \abs{\TE}^2$ time suffices.
\end{corollary}

In contrast to static graphs, $\abs{V}$ is to be considered as a useful parameter for temporal graphs because the maximum number of time-edges $\abs{\TE}$
can be arbitrarily much larger than $\abs{V}$.
However, the number of underlying edges is at most $\frac{1}{2}(\abs{V}^2 - \abs{V})$ which yields the following fixed-parameter tractability result for (S)TFCS.
\begin{observation}\label{obs:tfcs_v_fpt}
TFCS can be solved in $\mathcal{O}(2^{\frac{1}{2}(\abs{V}^2-\abs{V})} \cdot \abs{\TE}^2 \log\abs{\TE})$ time
and STFCS can be solved in $\mathcal{O}(2^{\frac{1}{2}(\abs{V}^2-\abs{V})} \cdot \abs{\TE}^2)$ time.
\end{observation}
\begin{proof}
Let $\TG$ be a temporal graph with underlying graph $G_\downarrow=(V,E_\downarrow)$.
As $G_\downarrow$ is a static graph, we have $\abs{E_\downarrow} \leq \frac{1}{2}(\abs{V}^2-\abs{V})$.
Thus, there are $2^{\abs{E_\downarrow}} \leq 2^{\frac{1}{2}(\abs{V}^2-\abs{V})}$ possible feedback connection sets,
each of which can be tested in $\bigO( \abs{\TE}^2 \log \abs{\TE})$ time resp.\ $\bigO(\abs{\TE}^2)$ time (\Cref{lem:shortestTour}).
\end{proof}

\section{Computational Hardness Results}\label{sec:hardness}
We now show that all four problem variants, (S)TFES and (S)TFCS, are NP-hard on simple temporal graphs with constant lifetime.
The proofs work by reducing from the classical 3-SAT problem. 

\begin{theorem}
STFES is NP-hard for simple temporal graphs with $\tau = 8$.
\label{thm:NP-hardness}
\end{theorem}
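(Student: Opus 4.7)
The plan is to reduce from 3-SAT. Given a 3-CNF formula $\phi$ with variables $x_1,\dots,x_n$ and clauses $C_1,\dots,C_m$, I will construct a simple temporal graph $\TG$ with $\tau = 8$ together with an integer $k$ such that $\phi$ is satisfiable if and only if $\TG$ admits a strict temporal feedback edge set of size at most $k$.

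For each variable $x_i$, I would build a small variable gadget whose only minimum-cost ways of destroying all internal strict temporal cycles are to delete one of two designated \emph{literal edges} $e_i^{T}$ and $e_i^{F}$. The choice between these two edges encodes the truth value of $x_i$: picking $e_i^{T}$ corresponds to setting $x_i$ to true. A natural realisation is to take two short strict temporal cycles on a handful of auxiliary vertices that share only the vertex pair carrying $e_i^{T}$ and $e_i^{F}$, with time labels chosen so that each cycle can only be broken cheaply by the matching literal edge. This forces exactly one literal edge per variable into any small solution.

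For each clause $C_j = \ell_{j,1} \vee \ell_{j,2} \vee \ell_{j,3}$, I would add a clause gadget consisting of a strict temporal cycle that traverses the literal edges corresponding to the three literals, using the convention that $x_i$ is represented by $e_i^{T}$ and $\neg x_i$ by $e_i^{F}$. The intended budget is $k = n$, so no solution can afford to spend any edge exclusively on a clause gadget; the clause cycle must therefore already be broken by a literal edge chosen in some variable gadget, which is precisely the condition that the assignment encoded by the solution satisfies $C_j$.

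To fit everything into $\tau = 8$, I would place the two literal edges of each variable at the extremal time labels, say $1$ and $8$, and use the six intermediate labels $2,\dots,7$ for the connecting edges inside the clause cycles. A strict temporal cycle of length at most $8$ traversing the three literal edges together with the connecting paths between them is then realisable in a straightforward way. Simplicity of $\TG$ can be arranged by giving each clause its own batch of fresh auxiliary vertices for the connecting edges, so that no vertex pair carries two time-edges.

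The main obstacle will be preventing unintended strict temporal cycles that arise from gluing the variable and clause gadgets together. In particular, one must rule out both cross-gadget cycles that could be destroyed ``for free'' by removing a single non-literal edge (which would allow one to avoid picking a literal edge for some variable) and cycles that could force more than one literal edge per variable into the solution. Verifying their non-existence amounts to a careful case analysis of which time-label sequences along edges from different gadgets are strictly increasing; this is precisely where the constant value $\tau = 8$ is exploited, and where the bulk of the technical work lies. Once these spurious cycles are ruled out, both directions of the equivalence between the satisfiability of $\phi$ and the existence of a strict temporal feedback edge set of size $k$ follow directly from the structural correspondence between literal-edge choices and truth assignments.
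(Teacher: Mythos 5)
There is a genuine gap, and it is concentrated in the clause gadget. You propose that each clause be checked by a single strict temporal cycle that \emph{traverses the three literal edges themselves}. But each literal edge $e_i^T$ or $e_i^F$ is one time-edge with one fixed, global time label, and you place these labels at the extremes $1$ and $8$. A strict temporal cycle must have strictly increasing labels along its edges, so it can contain at most one edge labelled $1$ and at most one labelled $8$ -- it can never traverse three edges whose labels all lie in $\{1,8\}$. Even if you spread the literal labels over $[8]$, the same literal edge would have to occur as the ``first'' literal in one clause cycle and as the ``last'' in another, which its single fixed label cannot accommodate; and two literals of the same clause could receive equal labels, making that clause cycle unrealisable. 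This is not a routine case analysis to be deferred; the gadget as designed cannot be instantiated. Your variable gadget is also inconsistent as stated: two cycles, each breakable \emph{only} by its own literal edge, force \emph{both} $e_i^T$ and $e_i^F$ into the solution (cost $2$ per variable, exceeding your budget $k=n$), and if instead both literal edges sit on the same vertex pair the graph is no longer simple.

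The paper's construction sidesteps exactly this obstacle by never routing a clause-checking cycle through more than one literal edge. It uses a variable gadget that is a single triangle containing $e_i^T$, $e_i^F$ and a helper edge (forcing at least one deletion), a clause gadget on five edges whose three internal cycles force at least two deletions, and a hub vertex $s$ that creates, for each clause, three cycles of the form $s \to v_i \to v_i^{T/F} \to w_j^z \to w_j \to s$ -- each passing through one variable gadget and one clause gadget only. The budget $k = n + 2m$ is tight for the gadget-internal cycles, so two of the three $s$-cycles per clause are killed inside the clause gadget and the third must be killed by the variable choice, which encodes satisfaction. If you want to salvage your approach, you would need either a hub-based design of this kind or a mechanism (such as the paper's subdivided connector edges at label $4$) that lets a clause consult a literal's truth value without threading a long cycle through several fixed-label literal edges.
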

\begin{proof} 
We show NP-hardness via a polynomial-time many-one reduction from \textsc{3-SAT}.
For a Boolean formula~$\Phi$ in conjunctive normal form (CNF) with at most three variables per clause, \textsc{3-SAT} asks if there is a satisfying truth assignment for $\Phi$.
Let $\Phi$ be such a formula with variables $x_1, x_2, \dots, x_n$ and clauses $c_1, c_2, \dots, c_m$ of the form $c_j=(\ell_j^1 \lor \ell_j^2 \lor \ell_j^3)$.
We construct an STFES instance with temporal graph~$\TG(\Phi)$ and $k=n+2m$ as follows.

For each variable $x_i$, we introduce a variable gadget (see \Cref{fig:NP-hardness_SAT_gadgets}) with vertices~$v_i$, $v_i^T$, and $v_i^F$ and time-edges $e_{i}^T:=(\{v_i,v_i^T\},2)$, $e_{i}^F:=(\{v_i,v_i^F\},3)$, and $e_{i}^h:=(\{v_i^T,v_i^F\},1)$.
As these three edges form the temporal cycle $(e_{i}^h, e_{i}^T, e_{i}^F)$, any solution for STFES must contain at least one of them.
For each clause $c_j$, we introduce a clause gadget with four vertices, $w_j$, $w_j^1$, $w_j^2$, and $w_j^3$,
and the edges $f_{j}^{a}:=(\{w_j^1,w_j^2\},1)$, $f_{j}^{b}:=(\{w_j^2,w_j^3\},2)$, $f_{j}^1:=(\{c_j,w_j^1\},7)$, $f_{j}^2:=(\{c_j,w_j^2\},6)$, $f_{j}^3:=(\{c_j,w_j^3\},5)$ (see \Cref{fig:NP-hardness_SAT_gadgets}).
The clause gadget contains three cycles which overlap in such a way that any solution has to contain at least two out of its five edges. 

\begin{figure}[t]
  \centering
  \tikzstyle{alter}=[circle, minimum size=22pt, draw, inner sep=1pt]
    \centering
    \begin{tikzpicture}[auto, >=stealth',shorten <=1pt, shorten >=1pt]
      \tikzstyle{majarr}=[draw=black]
      \node[alter] at (0,0) (xi) {$v_i$};
      \node[alter, below left = 2 of xi] (xt) {$v_i^T$};
      \node[alter, below right = 2 of xi] (xf) {$v_i^F$};

      \draw[majarr] (xt) -- node[above left]{$e_{i}^T, 2$} (xi);
      \draw[majarr] (xi) -- node [above right]{$e_{i}^F, 3$} (xf);
      \draw[majarr] (xf) -- (xt) node [midway]{$e_{i}^h, 1$};
    \end{tikzpicture}
    \begin{tikzpicture}[auto, >=stealth',shorten <=1pt, shorten >=1pt]
      \tikzstyle{majarr}=[draw=black]
      \node[alter] at (0,0) (li_1) {$w_j^1$};
      \node[alter, right = 2 of li_1] (li_2) {$w_j^2$};
      \node[alter, right = 2 of li_2] (li_3) {$w_j^3$};
	\node[alter, below = 1.6 of li_2] (li) {$w_j$};

      \draw[majarr] (li) -- node [midway]{$f_{j}^1, 7$} (li_1);
      \draw[majarr] (li_2) -- (li) node [midway]{$f_{j}^2, 6$};
      \draw[majarr] (li_3) -- (li) node [midway]{$f_{j}^3, 5$};
      \draw[majarr] (li_1) -- (li_2) node [midway]{$f_{j}^{a}, 1$};
      \draw[majarr] (li_2) -- (li_3) node [midway]{$f_{j}^{b}, 2$};
    \end{tikzpicture}
\caption{ Variable gadget (left) and clause gadget (right) used in the proof of \Cref{thm:NP-hardness}.
	Written next to each edge are its name and time-label.} 
\label{fig:NP-hardness_SAT_gadgets}
\end{figure}

\begin{figure}[t]  
\centering
  \tikzstyle{alter}=[circle, minimum size=8pt, draw, inner sep=1pt]
    \centering
    \scalebox{1}{
    \begin{tikzpicture}[auto, >=stealth',shorten <=1pt, shorten >=1pt]
      \tikzstyle{majarr}=[draw=black]
      \node[alter] at (0,0) (s) {};
      \node[left = 1ex of s] (s_label) {$s$};
      
      \node[alter] at (3,2) (x1) {};
      \node[above right = 1pt of x1] (x1_label) {$x_1$};
      \node[alter, below left= 2ex of x1] (x1_T) {};
      \node[left = 1pt of x1_T] (x1_T_label) {\tiny{T}};
      \node[alter, below right= 2ex of x1] (x1_F) {};
      \node[right = 1pt of x1_F] (x1_F_label) {\tiny{F}};
      \draw[majarr] (x1_F) -- (x1_T) node [midway]{\tiny{1}};
      \draw[majarr] (x1_T) -- (x1) node [midway]{\tiny{2}};
      \draw[majarr] (x1) -- (x1_F) node [midway]{\tiny{3}};
      
      \node[alter, right = 55pt of x1] (x2) {};
      \node[above right = 1pt of x2] (x2_label) {$x_2$};
      \node[alter, below left= 2ex of x2] (x2_T) {};
      \node[left = 1pt of x2_T] (x2_T_label) {\tiny{T}};
      \node[alter, below right= 2ex of x2] (x2_F) {};
      \node[right = 1pt of x2_F] (x2_F_label) {\tiny{F}};
      \draw[majarr] (x2_F) -- (x2_T)  node [midway]{\tiny{1}};
      \draw[majarr] (x2_T) -- (x2) node [midway]{\tiny{2}};
      \draw[majarr] (x2) -- (x2_F) node [midway]{\tiny{3}};
      
      \node[alter, right = 55pt of x2] (x3) {};
      \node[above right = 1pt of x3] (x3_label) {$x_3$};
      \node[alter, below left= 2ex of x3] (x3_T) {};
      \node[left = 1pt of x3_T] (x3_T_label) {\tiny{T}};
      \node[alter, below right= 2ex of x3] (x3_F) {};
      \node[right = 1pt of x3_F] (x3_F_label) {\tiny{F}};
      \draw[majarr] (x3_F) -- (x3_T)  node [midway]{\tiny{1}};
      \draw[majarr] (x3_T) -- (x3) node [midway]{\tiny{2}};
      \draw[majarr] (x3) -- (x3_F) node [midway]{\tiny{3}};
      
      \node[alter, right = 55pt of x3] (x4) {};
      \node[above right = 1pt of x4] (x4_label) {$x_4$};
      \node[alter, below left= 2ex of x4] (x4_T) {};
      \node[left = 1pt of x4_T] (x4_T_label) {\tiny{T}};
      \node[alter, below right= 2ex of x4] (x4_F) {};
      \node[right = 1pt of x4_F] (x4_F_label) {\tiny{F}};
      \draw[majarr] (x4_F) -- (x4_T)  node [midway]{\tiny{1}};
      \draw[majarr] (x4_T) -- (x4) node [midway]{\tiny{2}};
      \draw[majarr] (x4) -- (x4_F) node [midway]{\tiny{3}};

      \node[alter] at (5,0) (c1_l2) {};
      \node[alter, left = 15pt of c1_l2] (c1_l1) {};
      \node[alter, right = 15pt of c1_l2] (c1_l3) {};
      \node[alter, below = 15pt of c1_l2] (c1) {};
      \draw[majarr] (c1) -- (c1_l1) node [midway]{\tiny{7}};
      \draw[majarr] (c1_l2) -- (c1) node [midway]{\tiny{6}};
      \draw[majarr] (c1_l3) -- (c1) node [midway]{\tiny{5}};
      \draw[majarr] (c1_l1) -- (c1_l2)  node [midway]{\tiny{1}};
      \draw[majarr] (c1_l2) -- (c1_l3)  node [midway]{\tiny{2}};
      
      \draw[majarr] (c1_l1) -- (x1_T) node [midway]{\tiny{4}};
      \draw[majarr] (c1_l2) -- (x2_F) node [midway]{\tiny{4}};
      \draw[majarr] (c1_l3) -- (x3_T) node [midway]{\tiny{4}};
      
      \node[below = 15pt of c1] (c1_clause) {$(x_1 \lor \neg x_2 \lor x_3)$};
      
      \node[alter, right = 120pt of c1_l1] (c2_l2) {};
      \node[alter, left = 15pt of c2_l2] (c2_l1) {};
      \node[alter, right = 15pt of c2_l2] (c2_l3) {};
      \node[alter, below = 15pt of c2_l2] (c2) {};
      \draw[majarr] (c2) -- (c2_l1) node [midway]{\tiny{7}};
      \draw[majarr] (c2_l2) -- (c2) node [midway]{\tiny{6}};
      \draw[majarr] (c2_l3) -- (c2) node [midway]{\tiny{5}};
      \draw[majarr] (c2_l1) -- (c2_l2) node [midway]{\tiny{1}};
      \draw[majarr] (c2_l2) -- (c2_l3) node [midway]{\tiny{2}};
      
      \draw[majarr] (c2_l1) -- (x2_F) node [midway]{\tiny{4}};
      \draw[majarr] (c2_l2) -- (x3_F) node [midway]{\tiny{4}};
      \draw[majarr] (c2_l3) -- (x4_T) node [midway]{\tiny{4}};

      \draw[majarr] (x1.north) to ++(0,0.4) to +(-1,0) to (s.north);
      \draw[majarr] (x2.north) to ++(0,0.6) to +(-3.5,0) to (s.north);
      \draw[majarr] (x3.north) to ++(0,0.8) to +(-6,0) to (s.north);
      \draw[majarr] (x4.north) to ++(0,1) to +(-8.5,0) to node[above = 10pt] {\tiny{1}} (s.north) ;
      
      \draw[majarr] (c1.south) to ++(0,-0.2) to +(-2,0) to (s.south);
      \draw[majarr] (c2.south) to ++(0,-0.4) to +(-6,0) to node[below = 5pt] {\tiny{8}} (s.south);

      \node[below = 15pt of c2] (c2_clause) {$(\neg x_2 \lor \neg x_3 \lor x_4)$};
      
    \end{tikzpicture}}
    \caption{Example: Reduction from 3-SAT to STFES/STFCS.}
 
\label{fig:NP-hardness_SAT}
\end{figure}

We connect clauses to variables as follows (see \Cref{fig:NP-hardness_SAT} for an example).
Let~$c_j = (\ell_j^1 \lor \ell_j^2 \lor \ell_j^3)$ be a clause of $\Phi$.
If $\ell_j^1 = x_i$, then we add the edge $(\{w_j^1,v_i^T\},4)$ and, if $\ell_j^1 = \neg x_i$, we add $(\{w_j^1,x_i^F\},4)$ (edges for $\ell_j^2$ and $\ell_j^3$ analogously).
Further, we connect a new vertex $s$ to all variable gadgets by $(\{s,v_i\},1)$ for all $i \in [n]$ and to all clause gadgets by $(\{s,w_j\},8)$ for all $j \in [m]$.
This creates three additional cycles per clause, each starting and ending in $s$.
More precisely, if $x_i$ ($\neg x_i$ is handled analogously) is the $z$-th literal of clause~$c_j$,
then $\TG(\Phi)$ contains the cycle $C_{ij}^z=((\{s,v_i\},1), (\{v_i,v_i^T\},2, (\{v_i^T,w_j^z\},4), (\{w_j^z, w_j\},8-z), (\{w_j,s\},8))$.

It is easy to see that $\TG(\Phi)$ can be computed in polynomial time.
The general idea of this reduction is to use the solution size constraint to ensure that exactly one edge from each variable gadget and exactly two edges from each clause gadget are taken.
Thus, out of the three cycles starting in $s$ and going through the clause gadget of $c_j$, only two can be disconnected by picking two edges from $\{f_j^1, f_j^2, f_j^3\}$.
The remaining cycle has to be disconnected inside its variable gadget by picking either $e_i^T$ or $e_i^F$ which ``selects'' the variable that will satisfy the clause and gives us its truth assignment.
Now we show that $(\TG(\Phi),k)$ is a yes-instance of \textsc{STFES} if and only if $\Phi$ is satisfiable.

$(\Rightarrow):$ Let $\TE'$ be a solution to the constructed STFES instance.
Due to the size constraint $k \leq n+2m$ and the cycles existing inside the gadgets,
$\TE'$ contains exactly one edge from each variable gadget and two edges from each clause gadget.
Therefore, $\TE'$ cannot contain any edges adjacent to $s$ or edges connecting variable and clause gadgets.
We obtain the solution for the 3-SAT instance by setting~$x_i$ to true if $e_{i}^T \in \TE'$ and to false if $e_{i}^F \in \TE'$ or $e_{i}^h \in \TE'$.
Assume towards contradiction that there is a clause $c_j = (\ell_j^1 \lor \ell_j^2 \lor \ell_j^3)$ which is not satisfied.
Then, in all three variable gadgets connected to $w_j$, the edge needed to go from $s$ to the corresponding literal vertex of $c_j$ is present in $\TG(\Phi) - \TE'$.
As $\TE'$ contains only two of the edges from the clause gadget, the path of one of the three literals can be extended to the vertex $w_j$ and from there back to $s$, contradicting that $\TG(\Phi) - \TE'$ is cycle-free.

$(\Leftarrow):$ For the reverse direction, suppose we have a satisfying truth assignment for $\Phi$.
We obtain a solution $\TE'=\TE_{\text{Var}} \cup \TE_{\text{Cl}}$ for the STFES instance $(\TG(\Phi)=(V,\TE,\tau=8)$, $k=n+2m)$ as follows.
For the variable gadgets, we use the variable assignment to add the feedback edges 
$$
\TE_{\text{Var}} = \{e_{i}^T \mid i \in [n], x_i = \text{true}\} \cup \{e_{i}^F \mid i \in [n], x_i = \text{false} \}.
$$
For each clause $c_j = (\ell_j^1 \lor \ell_j^2 \lor \ell_j^3)$, let $z_j \in [3]$ be the number of one of the literals satisfying the clause, i.e., $\ell_j^{z_j} = \text{true}$. We add the edges between $w_j$ and the other two literal vertices to the feedback edge set:
$$
\TE_{\text{Cl}} = \{f_{j}^z \mid j \in [m], z \in [3], z \neq z_j\}.
$$
Note that this breaks all cycles inside the variable and clause gadgets and that $\abs{\TE'}=\abs{\TE_{\text{Var}}} + \abs{\TE_{\text{Cl}}} = n + 2m$.
Cycles going through multiple gadgets but not starting in $s$ are impossible as they would use at least two edges with time-label~$4$.
It remains to show that~$\TG-\TE'$ does not contain any cycle starting and ending in $s$.
Assume towards contradiction that there is such a cycle going through the variable gadget of $x_i$ and the clause gadget of $c_j$.
Further, assume that $x_i$ was set to $\text{true}$ (the other case is handled analogously) and that, therefore, $(\{v_i,v_i^F\},3) \in \TE\setminus \TE'$.
Then, the cycle begins with $(\{s,v_i\},1), (\{v_i,v_i^F\},3), (\{v_i^F,w_j^y\},4)$ for some $y \in [3]$.
Note that the edges $e_i^h$, $f_j^a$, and $f_j^b$ cannot be used due to the time-labels.
By construction of $\TG(\Phi)$, we know that if $(\{v_i^F,w_j^y\},4)$ exists, then $\ell_j^y$ is one of the literals satisfying the clause if $x_i = \text{false}$.
Since we assumed that $x_i = \text{true}$, it holds that $y \neq z_j$ and, thus, $f_{j}^y \in \TE_{\text{Cl}}$.
It follows that there is no edge which can be appended to the temporal path and, in particular, no possibility of reaching~$s$, thus contradicting the assumption.
\end{proof}

The temporal graph constructed in the proof for \Cref{thm:NP-hardness} does not contain any pair of vertices which is connected by more than one time-edge.
Hence, each underlying edge corresponds to a single time-edge and thus the reduction implies the following corollary.

\begin{corollary}
STFCS is NP-hard even for simple temporal graphs with $\tau = 8$.
\label{cor:NP-hardness_SAT_STFCS}
\end{corollary}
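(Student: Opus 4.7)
The plan is to leverage the construction from the proof of \Cref{thm:NP-hardness} directly. The key observation is that the temporal graph $\TG(\Phi)$ built there is \emph{simple}: examining each edge introduced (the three edges per variable gadget, the five edges per clause gadget, the literal-to-variable connections at time $4$, and the hub edges incident to $s$ at times $1$ and $8$), no pair of vertices is joined by more than one time-edge. In particular, every underlying edge of $\TG(\Phi)$ corresponds to exactly one time-edge.

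First, I would formalize this correspondence: the projection $\varphi\colon\TE\to E_\downarrow$ sending $(\{u,v\},t)$ to $\{u,v\}$ is a bijection on $\TG(\Phi)$. Consequently, for any $\TE'\subseteq\TE$, removing $\TE'$ from $\TG(\Phi)$ yields exactly the same temporal graph as removing the connection set $C':=\varphi(\TE')$ in the sense of \Cref{def:TemporalFCS}, and $|\TE'|=|C'|$. Hence $\TE'$ is a strict temporal feedback edge set of size at most $k$ if and only if $C'$ is a strict temporal feedback connection set of size at most $k$.

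Combining this equivalence with the reduction from the proof of \Cref{thm:NP-hardness} shows that $\Phi$ is satisfiable if and only if $(\TG(\Phi),k=n+2m)$ is a yes-instance of STFCS. Since $\TG(\Phi)$ has lifetime $\tau=8$, is simple, and is constructible in polynomial time, we obtain NP-hardness of STFCS on simple temporal graphs with $\tau=8$. For the case $\tau>8$ one simply appends $\tau-8$ empty layers, which creates no new temporal cycles and preserves simplicity.

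There is essentially no obstacle here: the content of the corollary is the general principle that on simple temporal graphs STFES and STFCS coincide under the natural identification of time-edges with underlying edges, so the hardness reduction transfers verbatim.
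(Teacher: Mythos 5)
Your proposal is correct and matches the paper's argument: the paper likewise observes that $\TG(\Phi)$ from \Cref{thm:NP-hardness} is simple, so each underlying edge corresponds to exactly one time-edge and the STFES reduction transfers directly to STFCS. Nothing further is needed.
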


A very similar reduction can also be used for the 
following.

\begin{corollary}
TFES and TFCS are both NP-hard even for simple 
temporal graphs with $\tau \geq 3$.
\label{cor:NP-hardness_SAT_TFES_TFCS}
\end{corollary}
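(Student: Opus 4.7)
The plan is to reuse the graph $\TG(\Phi)$ from the proof of \Cref{thm:NP-hardness} with the same vertex set, the same underlying edges, and the same budget $k = n + 2m$, and only to redistribute the time-labels onto $\{1, 2, 3\}$. Since the graph will again be simple, the TFCS statement will then follow exactly as in \Cref{cor:NP-hardness_SAT_STFCS}.

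Concretely, I would label as follows: time $1$ for each $e_i^T$ and each edge $\{s, v_i\}$; time $2$ for each $e_i^F$, every literal connection, and each clause spoke $f_j^1, f_j^2, f_j^3$; and time $3$ for each $e_i^h$, the clause-internal edges $f_j^a, f_j^b$, and each edge $\{s, w_j\}$. With this choice, every variable triangle is still a non-strict cycle via the traversal $v_i^T \to v_i \to v_i^F \to v_i^T$ with labels $(1, 2, 3)$; each of the three cycles inside a clause gadget admits a non-decreasing traversal with label multiset $\{2, 2, 3\}$ or $\{2, 2, 3, 3\}$; and each intended big $5$-cycle $s \to v_i \to v_i^{T/F} \to w_j^z \to w_j \to s$ has a non-decreasing label sequence $(1, 1, 2, 2, 3)$ or $(1, 2, 2, 2, 3)$. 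Consequently, the canonical feedback edge set obtained from a satisfying assignment (one edge per variable gadget and two non-satisfying clause spokes per clause) still has size $n + 2m$ and destroys every such cycle, and the matching lower bound $\abs{\TE'} \geq n + 2m$ together with the forward direction are inherited unchanged from the proof of \Cref{thm:NP-hardness}.

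The main obstacle, and what I would spend most of the work on, is to verify that no unintended non-strict temporal cycles appear. The natural candidates are (i) the $6$-cycles $s \to v_i \to v_i^T \to v_i^F \to w_j^z \to w_j \to s$ (and its mirror) that exploit the hypotenuse $e_i^h$; (ii) the $6$- and $7$-cycles obtained by extending a big cycle with a detour through $f_j^a$ or $f_j^b$; and (iii) cycles that weave across several clause gadgets through the bipartite structure of literal connections. For (i) and (ii) I would simply list the cyclic label sequence, for instance $(1, 1, 3, 2, 2, 3)$ in case (i) and $(1, 1, 2, 3, 2, 3)$ in case (ii), and observe that in each such sequence the two occurrences of the maximum label $3$ are cyclically separated by strictly smaller labels, so no rotation can be sorted and therefore no non-decreasing traversal exists. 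For (iii), the only label-$2$ edges are literal connections and clause spokes, and each literal vertex $w_j^z$ carries exactly one label-$2$ literal connection; hence any such cycle is forced to cross some clause gadget through two distinct spokes $f_j^{z_a} \neq f_j^{z_b}$, and since the canonical solution omits only the single satisfying spoke $f_j^{z_j}$, at least one of these two spokes lies in $\TE'$ and destroys the cycle. Once these three checks are in place, the non-strict counterpart of \Cref{thm:NP-hardness} follows, and simplicity of the construction immediately extends it to TFCS.
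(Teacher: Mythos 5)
Your relabeling does reproduce all the intended cycles, and your checks (i) and (ii) are sound, but case (iii) contains a genuine gap: it is \emph{not} true that every unintended non-strict cycle must cross some clause gadget through two distinct spokes. Whenever two clauses $c_j$ and $c_{j'}$ contain the same literal occurrence, so that $w_j^z$ and $w_{j'}^{z'}$ are both joined to the same vertex $v_i^X$, your labeling admits the cycle $w_j \to w_j^z \to v_i^X \to w_{j'}^{z'} \to w_{j'} \to s \to w_j$ with label sequence $(2,2,2,2,3,3)$, which is already non-decreasing. This cycle uses exactly \emph{one} spoke from each of the two clause gadgets, two literal connections, and the two label-$3$ edges at $s$; since the budget $n+2m$ is fully consumed by the vertex-disjoint variable triangles and clause gadgets, the literal connections and $s$-edges are never affordable, so the cycle can only be killed by deleting $f_j^z$ or $f_{j'}^{z'}$. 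Your canonical solution keeps exactly the spokes $f_j^{z_j}$, so this cycle survives whenever both clauses are forced to select the shared literal. Worse, take four copies of the clause $(x_1\lor x_2\lor x_3)$: the formula is satisfiable, yet every budget-respecting edge set removes exactly two of the five edges of each clause gadget and hence keeps at least one spoke per clause, while the cycles above force each of the three literal positions to have its spoke kept in at most one of the four clauses; that is at most three kept spokes for four clauses, a contradiction. So the instance is a no-instance and the direction ``$\Phi$ satisfiable $\Rightarrow$ yes-instance'' fails; the lower-bound and extraction arguments are not simply ``inherited unchanged,'' because in the strict construction the two edges at $s$ toward clause gadgets both carry the maximal label $8$ and can never be consecutive on a strict path, whereas two label-$3$ edges at $s$ compose perfectly well non-strictly.

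This is also exactly where your route departs from the paper's. The paper does not merely redistribute labels: it \emph{subdivides} every variable--clause connection into a two-edge path (label $2$ on the variable side, label $3$ on the clause side), which effectively orients these connections and makes every passage from a clause gadget into a variable gadget cost a descent; this is what the paper uses to exclude cycles running through several gadgets. Your unsubdivided label-$2$ connections let $v_i^T$ and $v_i^F$ act as hubs through which label-$2$ paths pass freely, which is why you are forced into the delicate weaving-cycle analysis in the first place --- and that analysis, as given, misses the through-$s$ cycle above. Any repair has to address that cycle specifically (for instance by further modifying the edges incident to $s$ or the labels of the spokes), not just the multi-gadget cycles avoiding $s$.
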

\begin{proof}[Proof sketch]
The changes that need to be made to the reduction described in the proof of \Cref{thm:NP-hardness} are shown in \Cref{fig:NP-hardness_SAT_nonstrict}.
Here, we have to subdivide the edges between variable and clause gadgets in order to avoid cycles which go through multiple gadgets but not through $s$.
In turn, only three different time-labels are needed to create the required cycles.
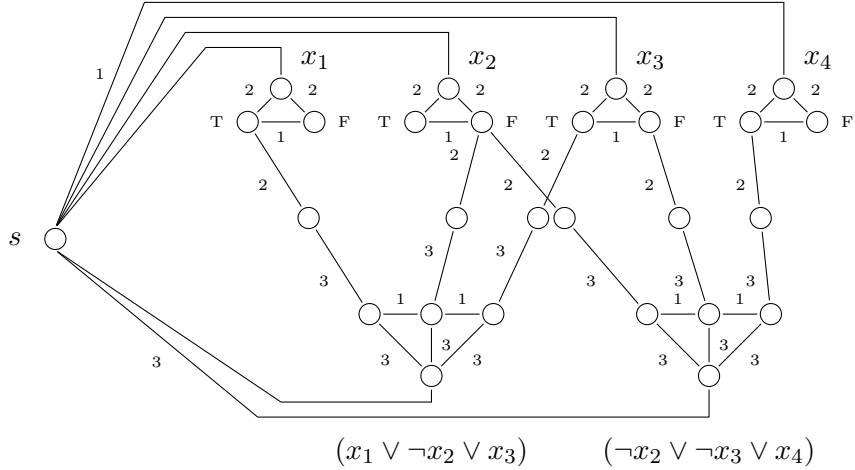
\begin{figure}[t]
\centering
  \tikzstyle{alter}=[circle, minimum size=8pt, draw, inner sep=1pt]
    \centering
    \scalebox{1}{
    \begin{tikzpicture}[auto, >=stealth',shorten <=1pt, shorten >=1pt]
      \tikzstyle{majarr}=[draw=black]
      \node[alter] at (0,0) (s) {};
      \node[left = 1ex of s] (s_label) {$s$};
      
      \node[alter] at (3,2) (x1) {};
      \node[above right = 1pt of x1] (x1_label) {$x_1$};
      \node[alter, below left= 2ex of x1] (x1_T) {};
      \node[left = 1pt of x1_T] (x1_T_label) {\tiny{T}};
      \node[alter, below right= 2ex of x1] (x1_F) {};
      \node[right = 1pt of x1_F] (x1_F_label) {\tiny{F}};
      \draw[majarr] (x1_F) -- (x1_T) node [midway]{\tiny{1}};
      \draw[majarr] (x1_T) -- (x1) node [midway]{\tiny{2}};
      \draw[majarr] (x1) -- (x1_F) node [midway]{\tiny{2}};
      
      \node[alter, right = 55pt of x1] (x2) {};
      \node[above right = 1pt of x2] (x2_label) {$x_2$};
      \node[alter, below left= 2ex of x2] (x2_T) {};
      \node[left = 1pt of x2_T] (x2_T_label) {\tiny{T}};
      \node[alter, below right= 2ex of x2] (x2_F) {};
      \node[right = 1pt of x2_F] (x2_F_label) {\tiny{F}};
      \draw[majarr] (x2_F) -- (x2_T) node [midway]{\tiny{1}};
      \draw[majarr] (x2_T) -- (x2) node [midway]{\tiny{2}};
      \draw[majarr] (x2) -- (x2_F) node [midway]{\tiny{2}};
      
      \node[alter, right = 55pt of x2] (x3) {};
      \node[above right = 1pt of x3] (x3_label) {$x_3$};
      \node[alter, below left= 2ex of x3] (x3_T) {};
      \node[left = 1pt of x3_T] (x3_T_label) {\tiny{T}};
      \node[alter, below right= 2ex of x3] (x3_F) {};
      \node[right = 1pt of x3_F] (x3_F_label) {\tiny{F}};
      \draw[majarr] (x3_F) -- (x3_T) node [midway]{\tiny{1}};
      \draw[majarr] (x3_T) -- (x3) node [midway]{\tiny{2}};
      \draw[majarr] (x3) -- (x3_F) node [midway]{\tiny{2}};
      
      \node[alter, right = 55pt of x3] (x4) {};
      \node[above right = 1pt of x4] (x4_label) {$x_4$};
      \node[alter, below left= 2ex of x4] (x4_T) {};
      \node[left = 1pt of x4_T] (x4_T_label) {\tiny{T}};
      \node[alter, below right= 2ex of x4] (x4_F) {};
      \node[right = 1pt of x4_F] (x4_F_label) {\tiny{F}};
      \draw[majarr] (x4_F) -- (x4_T) node [midway]{\tiny{1}};
      \draw[majarr] (x4_T) -- (x4) node [midway]{\tiny{2}};
      \draw[majarr] (x4) -- (x4_F) node [midway]{\tiny{2}};

      \node[alter] at (5,-1) (c1_l2) {};
      \node[alter, left = 15pt of c1_l2] (c1_l1) {};
      \node[alter, right = 15pt of c1_l2] (c1_l3) {};
      \node[alter, below = 15pt of c1_l2] (c1) {};
      \draw[majarr] (c1) -- (c1_l1) node [midway]{\tiny{3}};
      \draw[majarr] (c1_l2) -- (c1) node [midway]{\tiny{3}};
      \draw[majarr] (c1_l3) -- (c1) node [midway]{\tiny{3}};
      \draw[majarr] (c1_l1) -- (c1_l2) node [midway]{\tiny{1}};
      \draw[majarr] (c1_l2) -- (c1_l3) node [midway]{\tiny{1}};
      
		\node[alter] at ($(c1_l1)!0.5!(x1_T)$) (l1_mid) {};
		\draw[majarr] (l1_mid) -- (x1_T) node [midway]{\tiny{2}};
		\draw[majarr] (c1_l1) -- (l1_mid) node [midway]{\tiny{3}};
		
		\node[alter] at ($(c1_l2)!0.5!(x2_F)$) (l2_mid) {};
		\draw[majarr] (l2_mid) -- (x2_F) node [midway]{\tiny{2}};
		\draw[majarr] (c1_l2) -- (l2_mid) node [midway]{\tiny{3}};

		\node[alter] at ($(c1_l3)!0.5!(x3_T)$) (l3_mid) {};
		\draw[majarr] (l3_mid) -- (x3_T) node [midway]{\tiny{2}};
		\draw[majarr] (c1_l3) -- (l3_mid) node [midway]{\tiny{3}};

      \node[below = 15pt of c1] (c1_clause) {$(x_1 \lor \neg x_2 \lor x_3)$};
      
      \node[alter, right = 120pt of c1_l1] (c2_l2) {};
      \node[alter, left = 15pt of c2_l2] (c2_l1) {};
      \node[alter, right = 15pt of c2_l2] (c2_l3) {};
      \node[alter, below = 15pt of c2_l2] (c2) {};
      
      \draw[majarr] (c2) -- (c2_l1) node [midway]{\tiny{3}};
      \draw[majarr] (c2_l2) -- (c2) node [midway]{\tiny{3}};
      \draw[majarr] (c2_l3) -- (c2) node [midway]{\tiny{3}};
      \draw[majarr] (c2_l1) -- (c2_l2) node [midway]{\tiny{1}};
      \draw[majarr] (c2_l2) -- (c2_l3) node [midway]{\tiny{1}};
      
      \node[alter] at ($(c2_l1)!0.5!(x2_F)$) (l1_mid_2) {};      
      \node[alter] at ($(c2_l2)!0.5!(x3_F)$) (l2_mid_2) {};      
      \node[alter] at ($(c2_l3)!0.5!(x4_T)$) (l3_mid_2) {};
      
      \draw[majarr] (l1_mid_2) -- (x2_F) node [midway]{\tiny{2}};
      \draw[majarr] (l2_mid_2) -- (x3_F) node [midway]{\tiny{2}};
      \draw[majarr] (l3_mid_2) -- (x4_T) node [midway]{\tiny{2}};
      
      \draw[majarr] (c2_l1) -- (l1_mid_2) node [midway]{\tiny{3}};
      \draw[majarr] (c2_l2) -- (l2_mid_2) node [midway]{\tiny{3}};
      \draw[majarr] (c2_l3) -- (l3_mid_2) node [midway]{\tiny{3}};

      \draw[majarr] (x1.north) to ++(0,0.4) to +(-1,0) to (s.north);
      \draw[majarr] (x2.north) to ++(0,0.6) to +(-3.5,0) to (s.north);
      \draw[majarr] (x3.north) to ++(0,0.8) to +(-6,0) to (s.north);
      \draw[majarr] (x4.north) to ++(0,1) to +(-8.5,0) to node[above = 10pt] {\tiny{1}} (s.north) ;
      
      \draw[majarr] (c1.south) to ++(0,-0.2) to +(-2,0) to (s.south);
      \draw[majarr] (c2.south) to ++(0,-0.4) to +(-6,0) to node[below = 5pt] {\tiny{3}} (s.south);

      \node[below = 15pt of c2] (c2_clause) {$(\neg x_2 \lor \neg x_3 \lor x_4)$};
      
    \end{tikzpicture}}
    \caption{Example: Reduction from 3-SAT to TFES/TFCS.}
 
\label{fig:NP-hardness_SAT_nonstrict}
\end{figure}
\end{proof}

We can also observe that the strict problem variants are NP-hard even if all edges are present at all times.
This problem is essentially equivalent to selecting a set of edges of the underlying graph that intersects all cycles of length at most $\tau$, which is known to be NP-hard~\cite[Thm.~1]{yannakakis1981edge}.

\begin{observation}\label{obs:corehard}
	\STFES{} and \STFCS{} are NP-hard even on temporal graphs where all edges are present at all times,
	even with $\tau = 3$, planar underlying graph $G_\downarrow$, and $\Delta(G_\downarrow) = 7$.
\end{observation}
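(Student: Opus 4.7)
My plan is a polynomial-time reduction from the problem of finding a minimum edge set in a planar graph of maximum degree~$7$ that intersects every cycle of length at most $\tau = 3$ (equivalently, every triangle), which is NP-hard by Brügmann et al.~\cite{BRUGMANN200951}. Given such an instance $(G, k)$, I would construct the temporal graph $\mathcal{G} := (V(G),\, E(G) \times [3],\, 3)$, in which every edge of $G$ appears as a time-edge at each of the three times. Then $\mathcal{G}$ satisfies every restriction of the statement: every time-edge is present at all times, $\tau = 3$, $G_\downarrow = G$ is planar, and $\Delta(G_\downarrow) = 7$.

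I would then characterise the strict temporal cycles of $\mathcal{G}$. Since $\tau = 3$ forces the times along a strict cycle to be exactly $1, 2, 3$ in increasing order, every strict cycle has length three and is supported on a triangle of $G$, with the three underlying edges bijected to the three times; a brief case check confirms that all $3!$ such bijections yield valid cyclic traversals, so a triangle contributes six strict cycles. This already gives the claim for STFCS: deleting an underlying edge removes all of its time-copies at once, so a feedback connection set of $\mathcal{G}$ of size at most $k$ is exactly an edge set of $G$ of size at most $k$ that hits every triangle, and hence $(\mathcal{G}, k)$ is a yes-instance of STFCS iff $(G, k)$ is a yes-instance of triangle edge transversal.

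For STFES I would use the transformed parameter $k' = 3k$. The forward direction is easy: removing all three time-copies of each edge in a transversal $F$ with $|F| \leq k$ destroys every strict cycle, since any such cycle lies in a triangle and hence contains some edge of $F$. The reverse direction is the main obstacle. A K\"onig/Hall argument on the $3 \times 3$ bipartite incidence matrix of each triangle shows that any STFES must remove at least three time-edges supported on each triangle, but aggregating this into the required global bound $|X| \geq 3\, t^{\ast}(G)$ has to rule out alternative strategies such as deleting an entire time layer (which in dense non-planar graphs like $K_5$ can beat $3\, t^{\ast}$). The plan is to use planarity together with the degree bound~$\Delta = 7$ to show that in these sparse instances the layer-deletion cost $|E(G)|$ is never smaller than $3\, t^{\ast}(G)$, so the transversal strategy remains optimal and the optimum is pinned down to $3\, t^{\ast}(G)$; formalising this optimality is the most technical step.
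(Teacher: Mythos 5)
Your construction and source problem are exactly those of the paper: both reduce from the NP-hard problem of hitting all triangles in a planar graph with $\Delta = 7$ by $k$ edges, build $\TG$ by placing every edge at all three times, and observe that strict temporal cycles with $\tau=3$ are precisely triangles traversed with times $1,2,3$. Your \STFCS{} argument is complete and coincides with the paper's (an edge set is a strict temporal feedback connection set iff it hits all triangles, with the same budget $k$).

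For \STFES{}, however, your proposal has a genuine gap in the backward direction, and your sketched repair does not close it. You correctly observe that each triangle forces at least three deleted time-edges (the K\H{o}nig/Hall argument on the $3\times 3$ grid), and you correctly worry that the global bound $\abs{X} \geq 3\,t^{\ast}(G)$ does not follow from the per-triangle bounds. But the fix you propose --- showing that the whole-layer-deletion cost $\abs{E(G)}$ never undercuts $3\,t^{\ast}(G)$ on planar bounded-degree instances --- only eliminates \emph{one} competing strategy. An adversarial feedback edge set can mix per-triangle patterns: besides deleting all three time-copies of one underlying edge (a ``row'' of the grid), a triangle is also destroyed by deleting its three underlying edges at a single common time (a ``column'', three time-edges spread over three \emph{distinct} underlying edges), or by a $2\times 2$ block of four cells. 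Such column-type deletions can be shared across edge-intersecting triangles in ways that neither the transversal pattern nor the full-layer pattern describes, so pinning the optimum to $3\,t^{\ast}(G)$ requires an argument covering \emph{all} deletion sets, not a comparison of two candidate strategies. (Indeed, on non-planar graphs such as $K_5$ the equality $\mathrm{minSTFES} = 3\,t^{\ast}$ genuinely fails, so some structural property of the instances must be used.) To be fair, the paper's own proof is equally terse here --- it asserts that ``at least three time-edges from every triangle'' are needed and ``exactly three from each triangle suffice'' and declares the claim to follow --- so you have put your finger on precisely the step the paper leaves implicit; but as written, neither your plan nor its final sentence constitutes a proof of the extraction of a size-$k$ transversal from a size-$3k$ feedback edge set.
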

\begin{proof}
	Let $\TG = (V, \TE, 3)$ be a temporal graph with three layers $E_1 = E_2 = E_3$.
	Then an edge set $C' \subseteq E_\downarrow$ is a strict temporal feedback connection set 
	if and only if it intersects all triangles in $G_\downarrow$.
	
	Similarly, it is easy to see that a strict temporal feedback edge set $\TE' \subseteq \TE$ 
	must contain at least three time-edges from every triangle in $G_\downarrow$
	and exactly three time-edges from each triangle suffice.
	
	Since it is NP-hard to determine whether there exists a set of $k$ edges intersecting all triangles
	in a planar graph with maximum degree $7$ \cite[Thm.~2.1]{BRUGMANN200951}, the claim follows.
\end{proof}

We next show that our problems are W[1]-hard when parameterized by the solution size $k$ with a parameterized reduction from \textsc{Multicut in DAGs}~\cite{KratschPPW15}.
The idea here is that we can simulate a DAG $D$ by an undirected temporal graph 
by first subdividing all edges of $D$ and then assigning time-labels according to a topological ordering.
This ensures that each path in $D$ corresponds to a path in the resulting temporal graph and vice versa.
By adding a reverse edge from $t$ to $s$ for each terminal pair $(s, t)$ of the \textsc{Multicut} instance,
each $s$-$t$-path in~$D$ corresponds to a temporal cycle in the temporal graph and vice versa.

\begin{theorem}
\textsc{(S)TFES} and \textsc{(S)TFCS}, parameterized by the solution size~$k$, are W[1]-hard.
\label{thm:STFES_k_W1}
\end{theorem}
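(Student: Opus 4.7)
The plan is to reduce from \textsc{Multicut in DAGs}, which is W[1]-hard parameterized by the solution size~$k$~\cite{KratschPPW15}. An instance consists of a DAG $D=(V_D,A_D)$ and terminal pairs $(s_1,t_1),\dots,(s_p,t_p)$; one seeks $k$ arcs whose removal disconnects each $s_i$ from $t_i$. Setting $n:=\abs{V_D}$ and fixing a topological ordering $\pi\colon V_D\to[n]$, I would construct a simple temporal graph $\TG$ on the vertex set $V_D\cup\{x_a:a\in A_D\}\cup\{y_i^j:i\in[p],\,j\in[k+1]\}$ as follows. For each arc $a=(u,v)\in A_D$, I add the time-edges $(\{u,x_a\},2\pi(u))$ and $(\{x_a,v\},2\pi(v)-1)$; for each pair $i\in[p]$ and each $j\in[k+1]$, I add the back-edge gadget $(\{t_i,y_i^j\},2n+2i-1)$ and $(\{y_i^j,s_i\},2n+2i)$. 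The budget $k$ is kept unchanged, yielding a parameterized reduction, and since $\TG$ is simple the same construction will handle all four problem variants simultaneously.

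The structural claim I would prove next is that every (strict or non-strict) temporal cycle of $\TG$ has the form $s_i\to\dots\to t_i\to y_i^j\to s_i$ for some $i\in[p]$ and $j\in[k+1]$, where the forward portion corresponds to a directed $s_i$-$t_i$-path in $D$. The time-labels on forward subdivision edges strictly follow $\pi$, so any forward-only temporal path (even in the non-strict setting) mirrors a directed path in $D$. Since all back-edge times lie in $[2n+1,\,2n+2p]$, strictly above every forward time, a back-edge can only appear at the end of a temporal cycle; the internal ordering $2n+2i-1<2n+2i$ forces traversal of each back-edge gadget from $t_i$ through $y_i^j$ to $s_i$, and upon reaching $s_i$ no further time-edge is usable because the edges remaining at $s_i$ have time at most $2n-1<2n+2i$.

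For the forward direction of correctness, a Multicut $S$ of size $k$ yields the feedback edge set $\{(\{u,x_a\},2\pi(u)):a=(u,v)\in S\}$ (or the analogous feedback connection set), which intersects every temporal cycle because it intersects the underlying directed $s_i$-$t_i$-path. Conversely, given a feedback solution $\TE'$ of size~$k$, I define $S\subseteq A_D$ to consist of those arcs $a$ whose subdivision contains at least one edge of $\TE'$, giving $\abs{S}\le k$. If $S$ were not a Multicut, some $s_i$-$t_i$-path in $D\setminus S$ would translate to a forward temporal path avoiding $\TE'$; since $\TE'$ has only $k$ edges but there are $k+1$ independent back-edge gadgets for pair~$i$, at least one such gadget survives intact, and concatenation produces a surviving temporal cycle in $\TG-\TE'$, a contradiction.

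The main technical obstacle will be the cycle-structure analysis, in particular ruling out cycles that chain back-edge gadgets of different terminal pairs, traverse a back-edge gadget in reverse, or exploit equal time-labels $2\pi(v)-1$ on edges entering a common vertex in the non-strict setting. The staggered time-label design handles all three: distinct blocks $[2n+2i-1,\,2n+2i]$ per pair prevent chaining, the internal inequality within each gadget enforces unidirectional traversal, and a short case analysis confirms that non-strict ties can only extend temporal paths in ways that still mirror directed paths of $D$. The $(k+1)$-fold duplication of back-edges is the standard forcing gadget ensuring that size-$k$ solutions cannot cheaply kill cycles through back-edges alone and must project to a genuine arc-Multicut of~$D$.
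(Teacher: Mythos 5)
Your reduction is essentially the paper's: both start from \textsc{Multicut in DAGs}, assign time-labels along a topological ordering, subdivide arcs so that the undirected temporal graph can only be traversed in the direction of the DAG, add high-time-label back-edges from $t_i$ to $s_i$ to turn $s_i$-$t_i$-paths into temporal cycles, and use $(k{+}1)$-fold duplication to make certain edges effectively undeletable. The one substantive discrepancy is that you reduce from the \emph{arc}-deletion version of \textsc{Multicut in DAGs}, whereas the result of \citet{KratschPPW15} that the paper invokes is stated for deletion of nonterminal \emph{vertices}; this is why the paper splits each nonterminal $v$ into $v_{\text{in}},v_{\text{out}}$ joined by a single deletable time-edge and makes \emph{all} arc subdivisions heavy, while in your construction the subdivision edges themselves are the deletable objects. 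Your version is salvageable --- arc multicut in DAGs is W[1]-hard via the standard vertex-splitting reduction (which itself needs heavy non-split arcs) --- but as written you would either need to supply that intermediate reduction or cite the arc version explicitly; otherwise the argument rests on a hardness claim your reference does not directly give. The remaining deferred checks (no reverse traversal of subdivided arcs under equal labels $2\pi(v)-1$, no chaining of back-edge gadgets thanks to the per-pair time blocks) do go through, and your construction even avoids the paper's appeal to \citet[Lemma~2.1]{KratschPPW15} for distinct terminals, since every cycle is still forced to decompose as a forward $s_i$-$t_i$-path followed by a single back-edge gadget of the same pair~$i$.
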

We prove W[1]-hardness with a parameterized reduction from \textsc{Multicut in DAGs} parameterized by the solution size.

\problemdef{Multicut in DAGs}
	{A DAG $D=(V,A)$, a set of terminal pairs $\mathcal{T}=\{(s_i,t_i)\mid i \in [r] \text{ and } s_i,t_i \in V\}$, and an integer $k$.}
	{Is there a cut-set $Z\subseteq V$ of at most $k$ nonterminal vertices of~$G$
		 such that for all $i \in [r]$ the terminal $t_i$ is not reachable from $s_i$ in $G-Z$?}

\textsc{Multicut in DAGs} was shown to be W[1]-hard when parameterized by~$k$ by \citet[Thm.~1.2]{KratschPPW15} who also provided the following lemma which will simplify our proof by further restricting the input instance.

\begin{lemma}[{\citet[Lemma~2.1]{KratschPPW15}}]
Given a \textsc{Multicut in DAGs} instance $(D, \mathcal{T}, k)$ with $D=(V,A)$,
one can compute in polynomial time an equivalent instance $(D', \mathcal{T}', k')$ with $D'=(V',A')$ such that
\begin{enumerate}
\item $\abs{\mathcal{T}}=\abs{\mathcal{T}'}$ and $k=k'$;
\item $\mathcal{T}'=\{(s_i',t_i') \mid i \in [r]\}$ and all terminals $s_i'$ and $t_i'$ are pairwise distinct; and
\item for each $v \in V$ and $i \in [r]$ we have $(v, s_i')\notin A'$ and $(t_i', v)\notin A'$.
\end{enumerate}
\label{lem:W1_KratschPPW}
\end{lemma}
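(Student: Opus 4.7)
The statement is an imported result of \citet{KratschPPW15}, so my plan is to sketch only the main idea of their construction; no new argument beyond theirs is needed.

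The plan is to construct $(D', \mathcal{T}', k')$ by introducing, for each pair $(s_i, t_i) \in \mathcal{T}$, two fresh vertices $s_i'$ and $t_i'$, together with bridging arcs that let $s_i'$ play the role of a pure source and $t_i'$ the role of a pure sink, i.e.\ with zero in-degree and zero out-degree in $D'$ respectively. Setting $V' := V \cup \{s_i', t_i' \mid i \in [r]\}$, $\mathcal{T}' := \{(s_i', t_i') \mid i \in [r]\}$, and $k' := k$, and taking the $s_i', t_i'$ to be fresh and pairwise distinct, makes properties (1), (2), and (3) immediate from the construction.

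For the forward direction of the equivalence, I would observe that by construction every $s_i'$-to-$t_i'$ path in $D'$ projects, via the bridging arcs, to an $s_i$-to-$t_i$ path in $D$; hence any multicut $Z$ for $(D, \mathcal{T}, k)$ also serves as a multicut for $(D', \mathcal{T}', k')$ of the same size.

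The main obstacle is the backward direction. A multicut $Z'$ for $(D', \mathcal{T}', k')$ may contain one of the original terminals $s_i$ or $t_i$, because those are non-terminals of $D'$ and therefore legal cut vertices in the new instance, even though they are illegal cut vertices in the original instance. The construction of \citet{KratschPPW15} is designed precisely so that any such $Z'$ can be transformed, in polynomial time and without increasing its size, into a multicut that avoids all original terminals; this step is the heart of their proof and relies on a careful case analysis of the added arcs, to which I would refer for the full technical argument.
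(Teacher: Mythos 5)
First, a point of comparison: the paper gives no proof of this lemma at all --- it is imported verbatim from \citet{KratschPPW15} and used as a black box, so there is no in-paper argument to measure your sketch against. Judged on its own terms, however, your sketch contains a genuine gap. The construction you describe --- a fresh source $s_i'$ attached to $s_i$ by bridging arcs (and symmetrically a fresh sink $t_i'$ attached to $t_i$) --- does not by itself yield an equivalent instance, and the repair step you defer to (``any such $Z'$ can be transformed, without increasing its size, into a multicut avoiding the original terminals'') provably cannot exist for that construction. Concretely, take $D$ with vertices $s_1,a,b,t_1$ and arcs $(s_1,a),(a,t_1),(s_1,b),(b,t_1)$, with $\mathcal{T}=\{(s_1,t_1)\}$ and $k=1$. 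This is a no-instance: only $a$ and $b$ may be deleted and both are needed. After adding $s_1'$ with the arc $(s_1',s_1)$ and $t_1'$ with the arc $(t_1,t_1')$, the old terminal $s_1$ becomes a legal cut vertex, and $Z'=\{s_1\}$ already separates $s_1'$ from $t_1'$; the new instance is a yes-instance for $k'=1$. No size-preserving transformation can push this cut off $s_1$, because the original instance has no solution of size $1$ at all.

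The missing idea is that the construction must make the old terminals \emph{useless} as cut vertices, not merely relabel them as non-terminals --- for example by attaching $s_i'$ directly to the out-neighbours of $s_i$ and $t_i'$ directly to the in-neighbours of $t_i$, so that deleting $s_i$ or $t_i$ never disconnects the new pair $i$. Even then one must still argue that deleting an old terminal is never profitable for cutting some \emph{other} pair $j$ (which it cannot be in the original instance); this is precisely where the case analysis of \citet{KratschPPW15} does its work. As written, your sketch asserts the conclusion of that analysis for a construction under which it is false, so the ``forward direction is easy, backward direction is handled by the authors'' framing does not stand up without specifying the actual gadget.
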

We will from now on assume that we have an instance with these properties. 
The goal of our reduction will be to create one temporal cycle for each terminal pair.
Since there is a temporal path from $s_i$ to $t_i$ (the pair can be ignored otherwise), we can create a cycle by adding a \emph{back-edge} from $t_i$ to $s_i$.
To preserve the direction of the arcs in the (undirected) temporal graph, we will subdivide each arc into small paths with ascending time-labels.
As \textsc{Multicut in DAGs} asks for a vertex set, we also need to subdivide each nonterminal vertex $v$ into two new vertices $v_{\text{in}}$ and $v_{\text{out}}$ which are connected by one edge.
Then, the vertex~$v$ is in the cut-set of the original problem if the edge between $v_{\text{in}}$ and $v_{\text{out}}$ is in the solution edge set of the STFES instance.

Before stating our reduction, we introduce two auxiliary concepts.
First, when we want to exclude edges from (S)TFES/(S)TFCS solutions, we will employ a gadget we call \emph{heavy time-edge} which connects two vertices using $k+1$ parallel paths.

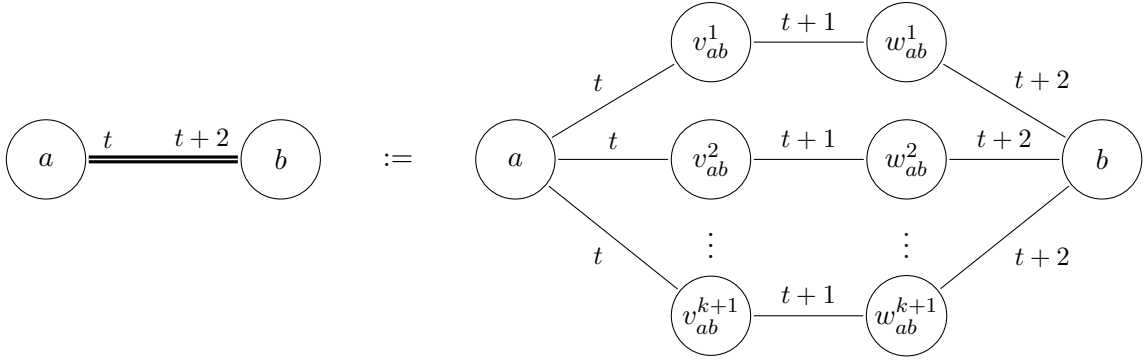
\begin{figure}[t]
  \centering
  \tikzstyle{alter}=[circle, minimum size=30pt, draw, inner sep=1pt]
    \centering
    \begin{tikzpicture}[scale=1.3,auto, >=stealth',shorten <=1pt, shorten >=1pt,scale=.8]
    \tikzstyle{majarr}=[draw=black]
    \node[alter] at (-6,0) (a0) {$a$};
	\node[alter] at (-3,0) (b0) {$b$};
	\node at (-1.5,0) (def) {$:=$};
	\draw[majarr, style=double, style=very thick] (a0) -- (b0) node [near start]{\small{$t\hphantom{+1}$}} node [near end]{\small{$t+2$}};
		      
		\node[alter] at (0,0) (a) {$a$};
		\node[alter] at (7.5,0) (b) {$b$};
		\node[alter] at (2.5,1.5) (v_ab_1) {$v_{ab}^1$};
		\node[alter] at (2.5,0) (v_ab_2) {$v_{ab}^2$};
		\node at (2.5,-1) (v_ab_dots) {$\vdots$};
		\node[alter] at (2.5,-2) (v_ab_k1) {$v_{ab}^{k+1}$};
		
		\node[alter] at (5,1.5) (w_ab_1) {$w_{ab}^1$};
		\node[alter] at (5,0) (w_ab_2) {$w_{ab}^2$};
		\node at (5,-1) (w_ab_dots) {$\vdots$};
		\node[alter] at (5,-2) (w_ab_k1) {$w_{ab}^{k+1}$};
		
		\draw[majarr] (a) -- (v_ab_1) node [midway]{\small{$t$}};
		\draw[majarr] (a) -- (v_ab_2) node [midway]{\small{$t$}};
		\draw[majarr] (a) -- (v_ab_k1) node [midway, below left]{\small{$t$}};
		
		\draw[majarr] (v_ab_1) -- (w_ab_1) node [midway]{\small{$t+1$}};
		\draw[majarr] (v_ab_2) -- (w_ab_2) node [midway]{\small{$t+1$}};
		\draw[majarr] (v_ab_k1) -- (w_ab_k1) node [midway]{\small{$t+1$}};
		
		\draw[majarr] (w_ab_1) -- (b) node [midway]{\small{$t+2$}};
		\draw[majarr] (w_ab_2) -- (b) node [midway]{\small{$t+2$}};
		\draw[majarr] (w_ab_k1) -- (b) node [midway,below right]{\small{$t+2$}};
    \end{tikzpicture}

\caption{Heavy time-edge $\hte (a,b,t)$.}
\label{fig:heavy_time_edge}
\end{figure}

\begin{definition}[Heavy time-edge]
Let $\mathcal{I}=(\TG=(V,\TE,\tau), k)$ be an instance of (S)TFES or (S)TFCS.
For $a,b \in V$ and $t \leq \tau - 2$, a heavy time-edge of $\TG$ is a subgraph $\hte (a,b,t) := (V_h, \TE_h, \tau_h = t+2)$ connecting vertex $a$ to vertex~$b$ with 
\[ V_h = \{a,b\} \cup \{v_{ab}^i, w_{ab}^i \mid i \in [k+1]\} \quad \text{and} \]
\[ \TE_h = \{(\{a,v_{ab}^i\},t), (\{v_{ab}^i, w_{ab}^i\},t+1), (\{w_{ab}^i,b\},t+2) \mid i \in [k+1] \}. \]
 \label{def:heavy_time_edge}
\end{definition}

The construction is shown in \Cref{fig:heavy_time_edge}.
Let $e^h:=\hte (a,b,t)$ be a heavy time-edge.
Due to the time-labels, the gadget only connects $a$ to $b$ (and not $b$ to~$a$) which we will use to model (directed) arcs with (undirected) time-edges.
For (S)TFES/(S)TFCS solutions, it is easy to see that if there is a temporal path from $b$ to $a$, then the $k+1$ cycles going through $e^h$ cannot be disconnected by removing edges inside the gadget.
Thus, without loss of generality we can assume  that a given solution contains no edges from $e^h$.
We also note that, for both temporal path models (i.e., strict and non-strict), it is possible to design a smaller gadget with identical properties, but we opted to use one which works for both models simultaneously.

Second, in order to assign time-labels while preserving all paths of the input graph~$D$, we will use an \emph{acyclic ordering} (also known as topological ordering) of~$D$.
For a directed graph $D=(V,A)$, an acyclic ordering $<$ is a linear ordering of the vertices with the property $(v,w) \in A \Rightarrow v < w$.
In other words, if we place the vertices on a line in the order given by $<$, then all arcs point in one direction (see \Cref{fig:W1_acyc_ord} for an example).
If $D$ is a DAG, then such an ordering exists and can be computed in linear time~\citep[Thm.~4.2.1]{Bang-JensenG02}.
For convenience, we represent this ordering as a function $\pi:V \rightarrow \mathbb{N}$ which maps each vertex to its position in the ordering.
We now have all the ingredients to prove the theorem.

\begin{proof}[Proof of \Cref{thm:STFES_k_W1}]
Let $\mathcal{I}=(D, \mathcal{T}, k)$ be an instance of \textsc{Multicut in DAGs} with terminal vertices $V_\mathcal{T} := \{s_i, t_i \mid (s_i, t_i) \in \mathcal{T}\}$.
We construct an instance $\mathcal{I'}=(\TG,k'=k)$ of (S)TFES as follows.

\begin{figure}[t]%
\centering
  \tikzstyle{alter}=[circle, minimum size=8pt, draw, inner sep=1pt]
    \scalebox{1.2}{
    \begin{tikzpicture}[scale=0.7, auto, >=stealth',shorten <=1pt, shorten >=1pt]
      \tikzstyle{majarr}=[draw=black]
      \tikzstyle{circled}=[circle, minimum size=16pt, draw, inner sep=1pt]
      \node[circled, label={above right:\small{$3$}}] at (0,0) (a) {$a$};
	\node[circled, label={above right:\small{$2$}}] at (2,0) (b) {$b$};
	\node[circled, label={above right:\small{$1$}}] at (4,0) (c) {$c$};
	\node[circled, label={above right:\small{$7$}}] at (0,-2) (d) {$d$};
	\node[circled, label={above right:\small{$4$}}] at (2,-2) (e) {$e$};
	\node[circled, label={above right:\small{$5$}}] at (4,-2) (f) {$f$};
	\node[circled, label={above right:\small{$8$}}] at (0,-4) (g) {$g$};
	\node[circled, label={above right:\small{$9$}}] at (2,-4) (h) {$h$};
	\node[circled, label={above right:\small{$6$}}] at (4,-4) (j) {$j$};
	
	\draw[->] (c)--(b);
	\draw[->] (b)--(a);
	\draw[->] (b)--(e);
	\draw[->] (a)--(e);
	\draw[->] (a)--(d);
	\draw[->] (d)--(g);
	\draw[->] (g)--(h);
	\draw[->] (j)--(h);
	\draw[->] (e)--(h);
	\draw[->] (e)--(f);
	\draw[->] (f)--(j);
	
    \end{tikzpicture}}

\bigskip   

    \scalebox{1.0}{
    \begin{tikzpicture}[auto, >=stealth',shorten <=1pt, shorten >=1pt]
    \tikzstyle{majarr}=[draw=black]
    \tikzstyle{circled}=[circle, minimum size=16pt, draw, inner sep=1pt]
    \node[circled] at (0,0) (c) {$c$};
	\node[circled] at (1,0) (b) {$b$};
	\node[circled] at (2,0) (a) {$a$};
	\node[circled] at (3,0) (e) {$e$};
	\node[circled] at (4,0) (f) {$f$};
	\node[circled] at (5,0) (j) {$j$};
	\node[circled] at (6,0) (d) {$d$};
	\node[circled] at (7,0) (g) {$g$};
	\node[circled] at (8,0) (h) {$h$};
	
	\path[->] (c.north) edge[bend left] (b.north);
	\path[->] (b.north) edge[bend left] (a.north);
	\path[->] (b.south) edge[bend right] (e.south);
	\path[->] (a.north) edge[bend left] (e.north);
	\path[->] (a.north) edge[bend left] (d.north);
	\path[->] (d.north) edge[bend left] (g.north);
	\path[->] (g.north) edge[bend left] (h.north);
	\path[->] (j.north) edge[bend left] (h.north);
	\path[->] (e.south) edge[bend right] (h.south);
	\path[->] (e.north) edge[bend left] (f.north);
	\path[->] (f.north) edge[bend left] (j.north);
	
    \end{tikzpicture}}
\caption{A DAG with values for $\pi (v)$ (derived from an acyclic ordering) for each vertex~$v$ (top) and  with the vertices aligned on a line according to~$\pi$ (bottom).}
\label{fig:W1_acyc_ord}
\end{figure}
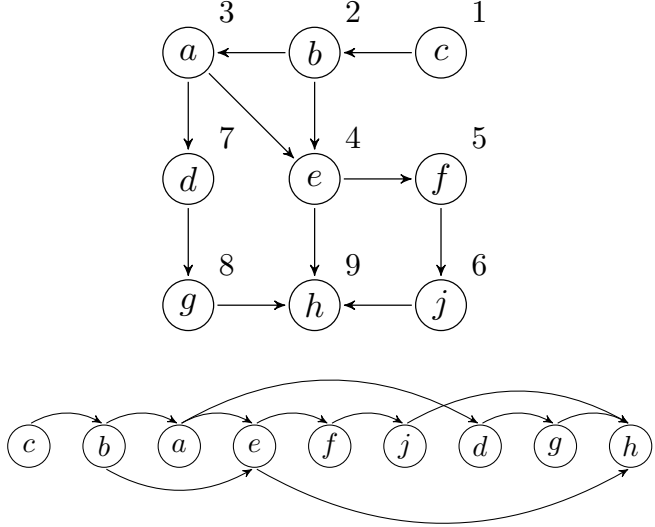

\begin{enumerate}
\item We compute an acyclic ordering of the vertices $V := V(D)$ and store it as a function $\pi:V \rightarrow \mathbb{N}$ (see \Cref{fig:W1_acyc_ord}).
We use $\pi$ to transform $D$ into an equivalent temporal graph $\TG^1=(V',\TE,\tau = 4\abs{V})$ by replacing each arc $(v,w) \in A$ with the heavy time-edge $e^{vw}:=\hte (v,w,4\pi (v)+1)$.
It is easy to verify that, for two vertices $s,t \in V$, the graph $D$ contains a path from vertex $s$ to vertex $t$ if and only if $\TG$ contains a temporal path from $s$ to $t$.
Note that starting each heavy time-edge with time-label $4\pi (v)+1$ leaves layer $4\pi (v)$ empty which we will use in the next step.

\item In $\TG^1$, we replace each nonterminal vertex $v \in V \setminus V_\mathcal{T}$ with two new vertices $v_{\text{in}}$ and $v_{\text{out}}$ connected by time-edge $e^v=(\{v_{\text{in}},v_{\text{out}}\}, 4\pi (v))$ and update the edges adjacent to $v$ as follows.
For each (incoming) edge of the form $e_{uv}:=(\{u,v\}, t)$ with $t < 4\pi (v)$, replace $e_{uv}$ with $(\{u,v_{\text{in}}\}, t)$.
For each (outgoing) edge of the form $e_{vw}:=(\{v,w\}, 4\pi (v)+1)$, replace $e_{vw}$ with $(\{v_{\text{out}},w\}, 4\pi (v)+1)$.
Let $\TG^2$ denote the resulting graph.
Clearly, for two vertices $s,t \in V$, removing $v$ in $\TG^1$ disconnects all $(s,t)$-paths if and only if removing $e^v$ in $\TG^2$ disconnects all $(s,t)$-paths.

\item We obtain $\TG^3=\TG$ by adding a back-edge $\hte (t_i,s_i,\tau_{\text{be}})$ with $\tau_{\text{be}} = 4\abs{V+1}$ for each terminal pair $(s_i, t_i)$.
Since there is a temporal path from $s_i$ to $t_i$, this creates at least one cycle for each terminal pair.

\end{enumerate}

\begin{figure}[t]%
\centering
  \tikzstyle{alter}=[circle, minimum size=8pt, draw, inner sep=1pt]
    \centering
    \scalebox{.78}{
    \begin{tikzpicture}[auto, >=stealth',shorten <=1pt, shorten >=1pt, scale=1]
    \tikzstyle{majarr}=[draw=black]
    \tikzstyle{circled}=[circle, minimum size=1cm, draw, inner sep=1pt]
    \tikzstyle{square}=[minimum size=1cm, draw, inner sep=1pt]
    \node at (-6, 1) (D) {$D$};
    \node[square] at (-8,0) (a_0) {$a$};
    \node[circled] at (-6,0) (b_0) {$b$};
    \node[square] at (-4,0) (c_0) {$c$};
    \draw[->] (a_0) -- (b_0);
    \draw[->] (b_0) -- (c_0);
    \node at (-2,0) (arrow) {$\rightarrow$};
    \node at (3, 1) (G) {$G$};
    \node[square] at (0,0) (a) {$a$};
	\node[circled] at (2,0) (b_in) {$b_\text{in}$};
	\node[circled] at (4,0) (b_out) {$b_\text{out}$};
	\node[square] at (6,0) (c) {$c$};
	
	\draw[majarr, style=double, style=very thick] (a) -- (b_in) node [near start]{\small{$5$}} node [near end]{\small{$7$}};
	\draw[majarr] (b_in) -- (b_out) node[midway] {$8$};
		\draw[majarr, style=double, style=very thick] (b_out) -- (c) node [near start]{\small{$9$}} node [near end]{\small{$11$}};
		\draw[majarr, style=double, style=very thick] (c) -- +(0,-1) -- +(-6,-1) node[near start] {\small{$16$}} node[near end] {\small{$18$}}  -- (a);		
    \end{tikzpicture}}
\caption{Example: Reduction from \textsc{Multicut in DAGs} with input digraph~$D$ and one terminal pair $(a,c)$ to TFES. Double lines represent heavy time-edges (\Cref{def:heavy_time_edge}).}
\label{fig:W1_example}
\end{figure}
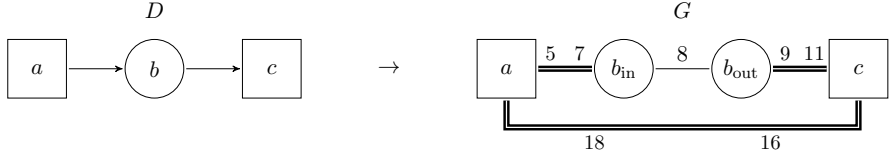

\Cref{fig:W1_example} shows a small example. It is easy to see that the construction can be done in polynomial time.
Now we show that $\mathcal{I}=(D, \mathcal{T}, k)$ is a yes-instance of \textsc{Multicut in DAGs} if and only if $\mathcal{I'}=(\TG,k'=k)$ is a  yes-instance of (S)TFES. 

$(\Rightarrow):$ Let $Z$ be a solution of $\mathcal{I}$.
We claim that $ \TE' = \{ (\{v_{\text{in}},v_{\text{out}}\}, 4\pi (v)) \mid v \in Z \}$ is a solution of $\mathcal{I'}$.
We first show that, for any terminal pair $(s_i,t_i)$, the graph $\TG - \TE'$ contains no temporal path from $s_i$ to $t_i$.
For $\TG^2$, this is easily verified as $D-Z$ contains no $(s_i,t_i)$-path.
In $\TG=\TG^3$, this claim holds if no temporal path from $s_i$ to $t_i$ contains a back-edge $e^{t_j s_j}=\hte (t_j,s_j,\tau_{\text{be}})$ added in step 3.
Due to the starting time-label of the back-edges, no temporal path can contain more than one back-edge, and
if it does contain one back-edge, then this back-edge must be at its end.
Clearly, the temporal path cannot end at both $t_i$ and $s_j$ unless $t_i = s_j$, which we excluded by applying \Cref{lem:W1_KratschPPW}.
Now, assume towards contradiction that $\TG - \TE'$ contains a cycle~$C$.
Since $\TG^2$ was cycle-free, $C$ must use some back edge $e^{t_i s_i}$ introduced in step 3 and, as reasoned above, this back edge must be the last edge of $C$.
However, there is no temporal path from $s_i$ to $t_i$ in $\TG - \TE'$ and, thus, $C$ cannot be a temporal cycle.

$(\Leftarrow):$ Let $\TE'$ be a solution of $\mathcal{I'}$, i.e., $\TG - \TE'$ contains no cycles.
Recall that $V_\mathcal{T} := \{s_i, t_i \mid (s_i, t_i) \in \mathcal{T}\}$ is the set of terminal vertices of $\mathcal{I}$.
As observed above, we may assume that $\TE'$ does not contain any edges from heavy time-edges.
Thus we have $\TE' \subseteq \{e^v \mid v \in V \setminus V_\mathcal{T}\}$ and define the solution for $\mathcal{I}$ as $Z:=\{v \mid e^v \in \TE'\}$.
Assume towards a contradiction that $D-Z$ contains an $(s_i,t_i)$-path for some terminal pair $(s_i,t_i)$.
This path induces a temporal path from $s_i$ to $t_i$ in $\TG-\TE'$
which we can extend back to $s_i$ by appending the back edge $e^{t_i s_i}$ to obtain a cycle in $\TG - \TE'$ and, thus, a contradiction.

For both directions, we have $\abs{Z}=\abs{\TE'} \leq k=k'$ meeting the requirements for the solution size.

As the constructed temporal graph $\TG$ contains no pair of vertices connected by more than one time-edge, we can easily transform a minimal feedback edge set of $\TG$ into a minimal feedback connection set.
Thus, the arguments presented in this proof also hold for (S)TFCS.
\end{proof}

\section{Fixed-Parameter Tractability Results}
\label{sec:algorithms}
After having shown computational hardness for the single parameters solution size~$k$ and lifetime~$\tau$ in \Cref{sec:hardness}, we now consider larger and combined parameters, and present fixed-parameter tractability results. %

\subsection{Parameterization by Number of Vertices}
\label{sec:fptn}
As shown in \Cref{obs:tfcs_v_fpt}, (S)TFCS is trivially fixed-parameter tractable with respect to the number of vertices~$\abs{V}$.
For (S)TFES, however, the same result is much more difficult to show as the size of the search space is only upper-bounded by $2^{\tau (\abs{V}^2-\abs{V})}$.
Here, the dependence on $\tau$ prevents us from using the (brute-force) approach that worked for (S)TFCS.

\begin{theorem}
\textsc{STFES} can be solved in $\bigO(2^{\abs{V}^3}\abs{V}^4 \tau)$ time
and \textsc{TFES} in $\bigO(2^{\abs{V}^3 + \abs{V}^2}\abs{V}^7\tau)$ time,
both requiring $\bigO(2^{\abs{V}^3})$ space.
\label{thm:FPTV}
\end{theorem}

We prove \Cref{thm:FPTV} using a dynamic program which computes the minimum number of time-edges which have to be removed to achieve a specified connectivity at a specified point in time.
The rough idea is that we can efficiently determine which edges need to be removed from layer~$t$
if we know which vertices can reach which other vertices until time $t-1$ and time $t$, respectively.
We would then like to exclude temporal cycles by simply requiring that every vertex is unreachable from itself until time $\tau$.
However, the situation is slightly more complicated as a (even non-trivial) temporal walk from a vertex to itself might simply ``backtrack'' along itself,
and thus not constitute a temporal cycle.
We will handle this complication by introducing \emph{side-trip-free} temporal walks, which are essentially forbidden from backtracking along previously used edges.

Let~$P$ be a walk (or temporal walk) with vertex sequence $v_1, \dots, v_{\ell+1}$.
We call~$P$ \emph{side trip free} if $v_i \neq v_{i+2}$ for all $1 \leq i \leq \ell-1$.
We further denote the second vertex of~$P$ by $\sigma(P) := v_2$ and the penultimate vertex by $\rho(P) := v_\ell$.
In the special case of $P$~being the trivial walk, we define $\rho(P) := v_1$.

The motivation behind side-trip-free temporal walks is due to the following observation.
\begin{observation}
	\label{thm:side-trip-free-is-cycle}
	A temporal graph contains a (strict) temporal cycle
	if and only if it contains a (strict) non-trivial side-trip-free temporal walk from a vertex to itself.
\end{observation}
\begin{proof}
	One implication is immediate.
	For the other direction, assume $P$~to be a (strict) non-trivial side-trip-free temporal walk from a vertex~$v$ to itself.
	Let $v_1, \dots, v_{\ell+1}$ be the vertex sequence of~$P$,
	where $v_1 = v = v_{\ell+1}$.
	Let $i$ be the maximal index such that there is $j > i$ with $v_i = v_j$.
	By also picking~$j$ minimally, we may assume that $v_i, v_{i+1}, \dots, v_{j-1}$ are pairwise distinct.
	Thus, the corresponding subwalk of~$P$ is a temporal cycle.
\end{proof}

The first index of our dynamic programming table will be a
\emph{connectivity table}~$A \in \{0,1\}^{\abs{V} \times \abs{V} \times \abs{V}}$
which is itself indexed by triples of vertices.
We say that a temporal graph~$\TG$ \emph{adheres} to~$A$
if $A_{uvw} = 0$ implies that $\TG$~contains no side-trip-free temporal walk~$P$ from~$u$ to~$w$ with $\rho(P)=v$.
We say that $\TG$ \emph{fully adheres} to~$A$ if this implication is bidirectional,
i.e., if $A_{uvw} = 1$ also implies that there is such a side-trip-free temporal walk.

Next, we define two functions, $\s_absvupdate (G, B, A)$ (\textbf{s}trict \textbf{r}equired \textbf{d}eletions)
and $\absvupdate (G, B, A)$ (\textbf{n}on-strict \textbf{r}equired \textbf{d}eletions),
which return the solution to the following subproblem:
Given connectivity tables~$B$ (\textbf{b}efore) and~$A$ (\textbf{a}fter) and a graph~$G$,
what is the minimum number of edge deletions in~$G$ to ensure that
for any temporal graph~$\TG$ with vertex set~$V(G)$ that fully adheres to~$B$,
the temporal graph obtained by appending the layer~$G$ to~$\TG$ adheres to~$A$?
(Of course, the difference between $\s_absvupdate$ and $\absvupdate$ is whether connectivity is
evaluated for strict or non-strict temporal walks.)
\Cref{fig:FPT_V} illustrates this problem for the strict case.

\begin{figure}[t] 
\centering
  \tikzstyle{alter}=[circle, minimum size=8pt, draw, inner sep=1pt]
    \centering
    \begin{tikzpicture}[auto, >=stealth',shorten <=1pt, shorten >=1pt]
      \tikzstyle{majarr}=[draw=black]
    \coordinate (y) at (0,5);
    \coordinate (x) at (7,0);
    \draw[<->] (y) node[above] {vertices} -- (0,0) --  (x) node[right]
    {time};
    \draw (0.5, 0) node[below] {$1$};

    \node[alter, fill=white, label = left:{$u$}] at (0,4.5) (u0) {};
    \node[alter, fill=white, label = left:{$v$}] at (0,2) (v0) {};
    \node[alter, fill=white, label = left:{$x$}] at (0,3.5) (x0) {};
    \node[alter, fill=white, label = left:{$w$}] at (0,1) (w0) {};

    \node[alter, fill=white, label = right:{$v$}] at (5.5,2) (vt) {};
    \node[alter, fill=white, label = right:{$x$}] at (4.5,3.5) (xt) {};
    \node[alter, fill=white, label = right:{$w$}] at (6.5,1) (wt) {};

    \draw[dotted,thick] (u0) -- +(0.5,0) -- +(1, -1.5) -- +(1.5, -1.5) -- +(2, -3) -- +(2.5, -3) -- +(3, -1) -- +(3.5, -1) -- +(4, -0.5) -- (xt);
    \node at (2.3,1) (b_label) {``$B_{uxv}=1$''};
    
    \draw (5, 0) node[below] {$t-1$};
    \draw (6, 0) node[below] {$t$};
    \draw[majarr] (vt) to (wt) ;
    \draw[majarr] (xt) to (vt);
    \end{tikzpicture}
    \caption{Illustration of the subproblem solved by $\s_absvupdate(G, B, A)$.
		If we have $A_{uvw} = 0$ (i.e., we want no side-trip-free temporal path~$P$ from $u$ to $w$ with $\rho(P) = v$ to exist at time $t$),
		and if we further have $B_{uxv} = 1$ (i.e., there is a side-trip-free temporal path $P'$ from $u$ to $v$ with $\rho(P') = x \neq w$ as indicated by the dotted line),
		then the time-edge $(\{v, w\},t)$ has to be removed from the graph.
	}
\label{fig:FPT_V}
\end{figure}
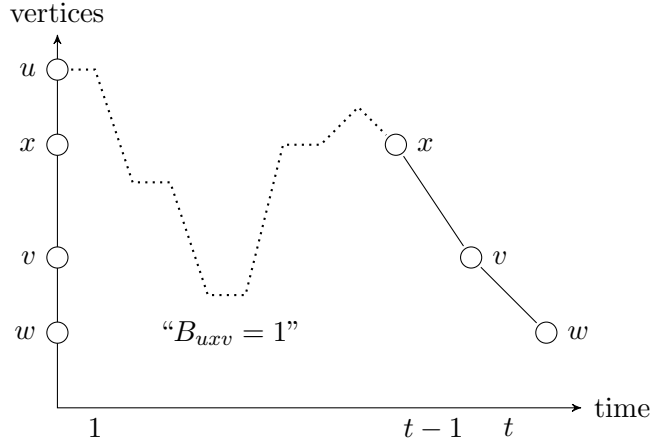

\begin{lemma}
\label{def:s_absvupdate}
Let $G=(V,E)$ be a static graph with $\abs{V}=n$ and let $A,B \in \{0,1\}^{n \times n \times n}$ be two connectivity tables.
Then
\begin{align*}
	\s_absvupdate (G, B, A) = 
	\begin{cases}
		\infty & \text{if } \exists u,v,w: B_{uvw} > A_{uvw} \\
		\abs*{\{ \{v, w\} \in E \mid \exists u: A_{uvw} = 0 \land  \exists x \neq w: B_{uxv} = 1\}} & \text{otherwise}
	\end{cases}
\end{align*}
\end{lemma}

\begin{proof}
Let $\TG$ be any temporal graph fully adhering to connectivity~$B$
and $\TG'$ obtained from appending~$G$ to~$\TG$ (say as layer~$\tau$).

Clearly, deleting edges from~$G = \TG'_\tau$ cannot destroy any temporal walks that exist in $\TG$.
Thus if $B_{uvw} = 1$ but $A_{uvw} = 0$ for some $u,v,w$, then no number of deletions suffices.
So suppose now $B_{uvw} \leq A_{uvw}$ for all $u, v, w$.

Let $\TG''$ be obtained from $\TG'$ by deleting from $\TG'_\tau$ the edges in
\[
	E' := \{ \{v, w\} \in E \mid \exists u: A_{uvw} = 0 \land  \exists x \neq w: B_{uxv} = 1\}.
\]
To show that $\TG''$ adheres to~$A$, it is sufficient to show for each entry~$A_{uvw}=0$
that $\TG''$ contains no side-trip-free strict temporal walk~$P$ from $u$ to $w$ with $\rho(P) = v$ which arrives exactly at time~$\tau$
(because $A_{uvw}=0$ implies $B_{uvw}=0$).
If such a walk~$P$ exists, then it must contain the edge $\{v, w\}$ at time~$\tau$
and further it must contain a side-trip-free strict temporal walk $P' \subseteq P$ from $u$ to~$v$
with $\rho(P') \neq w$.
(If we had $\rho(P') = w$, then $P$ would not be side trip free.)
But then, $\{v, w\} \in E'$ by definition of $E'$. 
This proves $\s_absvupdate(G,B,A) \leq \abs{E'}$.

To see that $\s_absvupdate(G,B,A) \geq \abs{E'}$,
let $\tilde{E}$ be any subset of~$E(G)$ and $\tilde{G}$~be obtained from~$\TG'$ by deleting the edges in~$\tilde{E}$ from layer~$\tau$.
If $\tilde{G}$ adheres to~$A$,
then we claim that $\tilde{E} \supseteq E'$.
Suppose not, then there are $u, v, w,x$ with $x \neq w$, $A_{uvw} = 0$, $B_{uxv} = 1$, and $\{v, w\} \in E(G) \setminus \tilde{E}$.
Since $\TG$ fully adheres to~$B$, there is in~$\TG$ a side-trip-free strict temporal walk~$Q$ 
from $u$ to~$v$ with $\rho(v) = x$.
Appending the time-edge~$(\{v, w\},\tau)$ to~$Q$ then shows that $\tilde{\TG}$ does not adhere to~$B$.
\end{proof}

\Cref{def:s_absvupdate} shows that $\s_absvupdate$ can be computed in $\bigO(\abs{V}^4)$~time
by iterating over all 4-vertex tuples~$(u,v,w,x)$.

In the non-strict case, a temporal path can successively use multiple edges from~$G$.
Thus, it is not possible to consider each entry $A_{uvw}=0$ separately (a single edge might be part of multiple unwanted temporal walks).
Instead, we have to find an optimal edge-cut disconnecting all ``problematic'' pairs in $G$.

\begin{lemma}
\label{def:absvupdate}
Let $G=(V,E)$ be a static graph with $\abs{V}=n$ and let $A,B \in \{0,1\}^{n \times n}$ be two connectivity tables.

If there exist $u,v,w$ such that $B_{uvw} > A_{uvw}$, then $\absvupdate (G, B, A)=\infty$.
Otherwise, $\absvupdate(G, B, A) = \abs{E'}$
where $E' \subseteq E(G)$ is a minimum-size set which intersects
for all $u,v,w,s,x$ with $A_{uvw} = 0$ and $B_{uxs} = 1$,
every non-trivial side-trip-free $s$-$w$ walk~$P$ with $\rho(P) = v$ and $\sigma(P) \neq x$.
\end{lemma}
\begin{proof}
Let $\TG$ be any temporal graph fully adhering to connectivity~$B$
and $\TG'$ obtained from appending~$G$ to~$\TG$ (say as layer~$\tau$).

As in the proof of \cref{def:s_absvupdate}, if there are~$u, v, w$ with $B_{uvw} = 1$ and $A_{uvw} = 0$, then (and only then) no amount of edge deletions suffices.
So suppose now otherwise.

Let $E' \subseteq E(G)$ be as stated above
and $\TG''$ be obtained from~$\TG'$ by deleting the edges in~$E'$ from~$\TG'_\tau$.
Suppose for contradiction that $\TG''$ does not adhere to~$A$,
i.e., there are~$u, v, w$ with $A_{uvw} = 0$ but a side-trip-free temporal walk~$P$ from~$u$ to~$w$ with $\rho(P)=v$ exists in~$\TG''$.
Since~$B_{uvw} = 0$, $P$~must arrive at time~$\tau$.
Let~$s$ be the last vertex $P$~reaches before time~$\tau$
and let $P', P''$ be the temporal sub-walks ending resp.\ starting at~$s$.
Set $x := \rho(P')$, then clearly $B_{uxs} = 1$
and $\sigma(P'') \neq x$.
Furthermore $\rho(P'') = \rho(P) = v$.
Together, this implies that $P''$ intersects $E'$ by definition of~$E'$.
This shows $\absvupdate(G, B, A) \leq \abs{E'}$.

To see that $\absvupdate(G, B, A) \geq \abs{E'}$,
let $\tilde{E}$ be any subset of~$E(G)$ and $\tilde{\TG}$~be obtained from~$\TG'$ by deleting the edges in~$\tilde{E}$ from layer~$\tau$.
If $\tilde{\TG}$ adheres to~$A$, 
then we claim that $\tilde{E}$ intersects
for all $u,v,w,s,x$ with $A_{uvw} = 0$ and $B_{uxs} = 1$,
every non-trivial side-trip-free $s$-$w$ walk~$P$ with $\rho(P) = v$ and $\sigma(P) \neq x$.
(Of course, this implies $\abs{\tilde{E}} \geq \abs{E'}$.)
So suppose that this is not the case.
Since $\TG$~fully adheres to~$B$, there is in~$\TG$ a side-trip-free temporal walk~$Q$ from $u$ to~$s$ with $\rho(Q) = x$.
Concatenating $Q$ and~$P$ then produces a side-trip-free temporal walk~$Q'$ from $u$ to~$w$ with $\rho(Q') = \rho(P) = v$
and $A_{uvw} = 0$,
which contradicts $\tilde{\TG}$ adhering to~$A$.
\end{proof}

\Cref{def:absvupdate} gives us a way to compute $\absvupdate(G,B,A)$:

\begin{lemma}
Function~$\absvupdate (G,B,A)$ can be computed in $\mathcal{O}(2^{\abs{V}^2} \cdot \abs{V}^7)$ time.
\label{lem:absv_rd_time}
\end{lemma}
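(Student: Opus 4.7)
The plan is to simply brute-force over all candidate edge subsets, since the number of edges in $G$ is at most $\binom{|V|}{2}<|V|^2/2$, which gives at most $2^{|E|}\le 2^{|V|^2/2}$ subsets to try. The bulk of the argument is then just to show that for a single candidate we can check validity of a \textsc{Multicut} solution in $\mathcal{O}(|V|^2)$ time, and that the pre-processing (testing the $\infty$ condition and assembling the terminal pair set $\mathcal{T}$) also fits into the stated budget.

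More concretely, I would proceed as follows. First, in $\mathcal{O}(|V|^2)$ time, scan all pairs $(i,j)$ to check whether there exist $i,j\in[n]$ with $b_{ij}={?}$ and $a_{ij}=0$; if so, return $\infty$. Otherwise, build the terminal set $\mathcal{T}=\{(v_k,v_j)\mid \exists i\in[n]: a_{ij}=0 \land (b_{ik}={?}\lor i=k)\}$ in $\mathcal{O}(|V|^3)$ time by iterating over the triples $(i,j,k)$ and inserting $(v_k,v_j)$ into $\mathcal{T}$ (using, say, a Boolean $|V|\times|V|$ matrix so that insertions and lookups are constant-time). Then enumerate all subsets $E'\subseteq E(G)$; for each such subset compute the connected components of $G-E'$ in $\mathcal{O}(|V|+|E|)=\mathcal{O}(|V|^2)$ time by BFS, and verify in $\mathcal{O}(|V|^2)$ time that every pair in $\mathcal{T}$ lies in distinct components. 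Keep track of the minimum $|E'|$ that is feasible and return it.

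Correctness follows directly from \Cref{def:absvupdate}: a feasible $E'$ is precisely an edge set whose removal disconnects every terminal pair in $\mathcal{T}$, so the minimum-cardinality feasible $E'$ is an optimal \textsc{Multicut} solution. Since some subset (e.g.\ $E'=E(G)$) always disconnects everything when $\mathcal{T}$ is well-defined, the minimization is nonempty. For the running time, the pre-processing costs $\mathcal{O}(|V|^3)$, the outer enumeration has $2^{|E|}\le 2^{|V|^2}$ iterations, and each iteration costs $\mathcal{O}(|V|^2)$ for component computation plus $\mathcal{O}(|V|^2)$ for verification, yielding the claimed $\mathcal{O}(2^{|V|^2}\cdot |V|^2)$ bound (the pre-processing is absorbed).

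There is no real obstacle here beyond being careful with the accounting: the only subtlety is that naively one might list the $|V|^2$ terminal pairs \emph{and} test them individually with BFS for each subset, which would blow up the per-subset cost. The fix, as described, is to compute the connected components of $G-E'$ once and then check each terminal pair against the component labelling in constant time per pair, keeping the per-subset work at $\mathcal{O}(|V|^2)$.
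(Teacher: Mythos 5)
Your proposal is correct and follows essentially the same approach as the paper: brute-force enumeration of all $2^{|E|} \le 2^{|V|^2}$ edge subsets, each verified as a \textsc{Multicut} solution via a single BFS/connected-components computation in $\mathcal{O}(|V|^2)$ time. You merely spell out the pre-processing (the $\infty$ check and construction of $\mathcal{T}$) and the per-subset verification in more detail than the paper does, and these details are all consistent with the claimed bound.
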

\begin{proof}
The number of subsets of $E$ is at most $2^{\abs{V}^2-\abs{V}}$.
For each subset~$E' \subseteq E$ and every tuple~$(u,v,w,s,x)$,
we can check 
in $\bigO(\abs{E(G)}) \subseteq \bigO(\abs{V}^2)$~time
whether $G$ contains a non-trivial side-trip-free $s$-$w$ walk~$P$ with $\rho(P) = v$ and $\sigma(P) \neq x$.
We do this by checking whether $G$ contains the edge $\{v, w\}$ as well as an $s$-$v$ walk which does not use~$w$.
Thus, $\absvupdate (G, B, A)$ can be computed in $\mathcal{O}(2^{\abs{V}^2} \cdot \abs{V}^7)$ time.
\end{proof}

We can now define the dynamic program which we will use to prove \Cref{thm:FPTV}.
Let $n := \abs{V}$ and $A \in \{0,1\}^{n \times n \times n}$ be a connectivity table.
Recall that $\TG_{[t]}$ refers to the first $t$~layers of~$\TG$.
We define the table entry $\T(A,t) \in \mathbb{N}$ as the minimum number of time-edges which have to be removed from $\TG_{[t]}$
in order for the resulting temporal graph to adhere to~$A$.

Then, $\T$~is as follows. 

\newcounter{subeq}
\renewcommand{\thesubeq}{\theequation\alph{subeq}}
\newcommand{\newsubeqblock}{\setcounter{subeq}{0}\refstepcounter{equation}}
\newcommand{\mysubeq}{\refstepcounter{subeq}\tag{\thesubeq}}

\begin{lemma}
\label{thm:recursion-formula}
$\T$ as defined above satisfies the following recursive formula.
\begin{align}
\T(A,0) &= 0\\
\newsubeqblock
\mysubeq \text{strict paths: }\T(A,t) &= \min_{B \in \{0,1\}^{n \times n \times n}} \T(B,t-1) + \s_absvupdate (G_t, B, A) \label{eq:s_absvupdate} & \forall t > 0\\
\mysubeq \text{non-strict paths: }T(A,t) &= \min_{B \in \{0,1\}^{n \times n \times n}} \T(B,t-1) + \absvupdate (G_t, B, A) & \forall t > 0
\label{eq:absvupdate}
\end{align}
\label{lem:absv_T_correct}
\end{lemma}

\begin{proof}
We prove the lemma for the strict case via induction over $t$. The non-strict case works analogously.
The correctness of the initialization $T(A,0)=0$ is easy to see
since the layerless temporal graph $\TG_{[0]}$ contains no non-trivial temporal paths.

Let now $t > 0$.
It is easy to see that 
\[
	\T(A, t) \leq \min_{B \in \{0,1\}^{n \times n \times n}} \T(B,t-1) + \s_absvupdate (G_t, B, A)
\]
because, for any choice of $B$,
$\T(B,t-1) + \s_absvupdate (G_t, B, A)$
is the minimum number of edge deletions required to have $\TG_{[t-1]}$ fully adhere to~$B$
and have $\TG_{[t]}$ adhere to~$A$.

To prove the reverse inequality, let $\TE'$ be a minimum-size set of time-edges
whose removal from~$\TG_{[t]}$ ensures that the resulting temporal graph $\TG'$ adheres to connectivity table~$A$.
Partition~$\TE'$ into $\tilde{\TE} \cup \hat{\TE}$ with $\hat{\TE}$~containing the time-edges at time~$t$.
Let~$\tilde{B}$ be the connectivity table which $\TG'_{[t-1]}$ fully adheres to.

We claim that $\abs{\tilde{\TE}} = \T(\tilde{B}, t-1)$.
Clearly $\abs{\tilde{\TE}} \geq \T(\tilde{B}, t-1)$.
Furthermore, if $\abs{\tilde{\TE}} > \T(\tilde{B}, t-1)$ were true,
then one could take a set $X$ of $\T(\tilde{B}, t-1)$ time-edges whose deletion
causes $\TG_{[t-1]}$ to adhere to~$\tilde{B}$.
The set $X \cup \hat{\TE}$ would then contradict the minimality of $\TE'$.

Note that $\s_absvupdate(G_t, \tilde{B}, A) = \abs{\hat{\TE}}$ by definition of $\s_absvupdate$.
Therefore, 
\[
	\T(A, t) =  \T(\tilde{B},t-1) + \s_absvupdate (G_t, \tilde{B}, A) \geq  \min_{B \in \{0,1\}^{n \times n \times n}} \T(B,t-1) + \s_absvupdate (G_t, B, A).\qedhere
\]
\end{proof}

We now have all required ingredients to prove \Cref{thm:FPTV}. 

\begin{proof}[Proof of \Cref{thm:FPTV}]

Let $(\TG,k)$ be an instance of (S)TFES and $n$~the number of vertices.
Further, let~$A^* \in \{0, 1\}^{n \times n \times n}$ be the connectivity table with~$A^*_{uvw}=0$ if and only if $u = w \neq v$.
Then it follows from the definition of $\T$ that $\T(A^*,\tau) \leq k$
if and only if there is a set~$\TE'$ of at most~$k$ time-edges
such that $\TG - \TE'$ contains no (strict) non-trivial side-trip-free temporal walk from a vertex to itself.
By \cref{thm:side-trip-free-is-cycle} this is equivalent to $\TG-\TE'$ being free of (strict) temporal cycles.
Thus $\T(A^*,\tau) \leq k$ if and only if $(\TG,k)$ is a yes-instance.

We compute $\T(A^*, \tau)$ by means of \cref{lem:absv_T_correct}.
This requires to compute $2^{n^3}\tau$ table entries,
each taking $\bigO(n^4)$~time in the strict case (\cref{def:s_absvupdate})
and $\bigO(2^{n^2} n^7)$~time in the non-strict case (\cref{def:absvupdate}).

As it suffices to only keep the table entries for $t-1$ and $t$ in memory at any given time,
the computation requires $\bigO(2^{n^3})$~space.
\end{proof}

We note that our dynamic program indeed solves the optimization variant of (S)TFES.
That is, given a temporal graph $\TG$, the dynamic program finds the smallest~$k$ for which $(\TG,k)$~is a yes-instance of the decision variant stated defined in \Cref{sec:intro}.
As shown in the previous proof, we can easily use this result to solve any instance $(\TG,k')$ of the decision variant by comparing~$k'$ to~$k$.

For the ease of presentation, we did not store the actual solution, that is, the feedback edge set of size~$T(A,t)$.
However, the functions $\s_absvupdate (G_t, B, A)$ and $\absvupdate (G_t, B, A)$ can easily be changed to return the solution edge sets for each layer $t$.
Note that this is possible without changing the asymptotic running time
by storing, in each table entry $T(A, t)$, a reference to the entry $T(B, t-1)$ for which the minimum of the recursive formula of \cref{thm:recursion-formula} is assumed.
In this way, it can be avoided to copy the corresponding solution sets over and over.

\subsection{Parameterization by Treewidth and Lifetime}
\newcommand{\inc}{\operatorname{inc}}
\newcommand{\pres}{\operatorname{pres}}
\newcommand{\tim}{\operatorname{time}}
\newcommand{\edge}{\operatorname{edge}}
In this last part, we show that all our problem variants are fixed-parameter tractable when parameterized by the combination of the treewidth of the underlying graph and the lifetime.
To this end we employ an optimization variant of Courcelle's famous theorem
on graph properties expressible in monadic second-order (MSO) logic \cite{ARNBORG1991308,courcelle2012graph} and apply it in the temporal setting~\cite{Flu+20}.

\begin{theorem}
\label{thm:mso}
\textsc{(S)TFES} and \textsc{(S)TFCS} are fixed-parameter tractable when parameterized by the combination of the treewidth of the underlying graph and the lifetime.
\end{theorem}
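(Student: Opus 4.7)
The plan is to express each of \STFES, \TFES, \STFCS, and \TFCS{} as an MSO optimization problem on a relational structure whose Gaifman graph has treewidth bounded in terms of $\tw_\downarrow$, and then to invoke the optimization variant of Courcelle's theorem (due to Arnborg, Lagergren, and Seese).

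First, I would build a colored incidence structure $G^*$ from $\TG=(V,\TE,\tau)$: its elements are $V\cup\TE$, with an edge between $v\in V$ and $e\in\TE$ whenever $v$ is an endpoint of $e$, and with one unary predicate $\tim_t$ for each $t\in[\tau]$ marking the time-edges of label $t$, together with a distinguishing predicate on the elements of $V$. Starting from a width-$\tw_\downarrow$ tree decomposition of $G_\downarrow$, I attach for each $e=(\{u,v\},t)\in\TE$ a fresh leaf bag $\{u,v,e\}$ to any bag containing $\{u,v\}$; this produces a tree decomposition of $G^*$ of width at most $\max(\tw_\downarrow,2)$, which is independent of $\tau$ and $|\TE|$. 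For the connection-set variants I additionally introduce one element per underlying edge, linked to its two endpoints and to every time-edge lying over it; this keeps the treewidth bounded by $\tw_\downarrow+\mathcal{O}(1)$.

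The core step is to write an MSO$_1$ formula $\varphi(S)$ with a single free set variable $S$ (ranging over time-edges for \STFES/\TFES{}, or over the underlying-edge elements for \STFCS/\TFCS) that expresses ``$\TG-S$ contains a (strict) temporal cycle''. In the strict case, any strict temporal cycle uses each time label at most once and hence has length at most $\tau$, so $\varphi(S)$ is a finite disjunction over all subsets $T=\{t_1<\dots<t_k\}\subseteq[\tau]$ with $k\geq 3$; each disjunct is a first-order sentence asserting the existence of distinct $v_1,\dots,v_k\in V$ and $x_1,\dots,x_k\in\TE\setminus S$ such that $\tim_{t_i}(x_i)$ holds and $x_i$ is incident to both $v_i$ and $v_{i+1\bmod k}$. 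In the non-strict case, cycles can be arbitrarily long, but a non-strict temporal cycle still decomposes uniquely into at most $\tau$ maximal sub-paths $P_1,\dots,P_k$, concatenated cyclically, where $P_i$ consists solely of time-$t_i$ edges and $t_1<\dots<t_k$. I would enumerate the time pattern $T=\{t_1<\dots<t_k\}$ and existentially quantify sets $P_i\subseteq\TE\setminus S$ and boundary vertices $u_0,\dots,u_{k-1}\in V$, then use the standard MSO encodings of connectivity and vertex degree (inside $G^*$) to state that each $P_i$, together with the $V$-vertices it touches, forms a simple $u_{i-1}$-$u_i$-path whose edge-vertices all carry the color $\tim_{t_i}$ (indices mod $k$), that the $P_i$ are internally vertex-disjoint, and that the total number of edge-vertices is at least three. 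Both formulas have size depending only on $\tau$.

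Finally, applying the optimization version of Courcelle's theorem to $\neg\varphi(S)$ on $G^*$ computes the minimum $|S|$ for which $\TG-S$ is (strict) temporal-cycle-free in $f(\tw_\downarrow,\tau)\cdot(|V|+|\TE|)^{\mathcal{O}(1)}$ time, settling all four variants. The main obstacle is the non-strict cycle formula: one must capture cycles whose length is not a priori bounded by any function of the parameter, while keeping the formula size bounded by a function of $\tau$ alone. The resolution exploits the fact that the number of distinct time labels along any cycle is at most $\tau$, so the cycle is summarized by its (at most $\tau$) sub-path/boundary-vertex data; one then has to encode the internal vertex-disjointness of these sub-paths without implicitly referring to a cyclic ordering, which is routine given the MSO-definable degree and connectivity predicates.
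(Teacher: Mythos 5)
Your proposal is correct and follows essentially the same route as the paper: build an incidence-type auxiliary structure whose treewidth is bounded in the parameters, express the existence of a (strict or non-strict) temporal cycle by an MSO formula of size bounded by a function of $\tau$ — handling the unbounded length of non-strict cycles by decomposing them into at most $\tau$ within-layer segments between waypoint vertices — and apply the Arnborg--Lagergren--Seese optimization version of Courcelle's theorem. The only notable differences are presentational (you encode time via unary predicates, keeping the auxiliary structure's treewidth independent of $\tau$ and pushing all $\tau$-dependence into the formula, whereas the paper adds $[\tau]$ as elements and accepts treewidth $\bigO(\tau\cdot\tw(G_\downarrow)^2)$), and your explicit enforcement of internal vertex-disjointness of the layer segments is, if anything, slightly more careful than the paper's sketch.
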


To prove this result we require an auxiliary (static) graph $\Ss$ whose vertex set is the disjoint union of 
\begin{itemize}
	\item the set~$V$ of vertices of $\TG$,
	\item the set~$E := E(G_\downarrow)$ of underlying edges,
	\item the set~$[\tau]$ of points in time, and
	\item the set~$\TE$ of time-edges.
\end{itemize}
Its edges are given by the (disjoint union of) the following binary relations,
where we write $R(e,v)$ as a shortcut for $(e, v) \in R$:
\begin{itemize}
	\item the incidence relation $\inc \subseteq E \times V$ where $\inc(e, v) \iff v \in e$,
	\item the time relation $\tim \subseteq \TE \times [\tau]$ where $\tim((e, t), t') \iff t = t'$,
	\item the edge relation $\edge \subseteq \TE \times E$ where $\edge((e, t), e') \iff e = e'$, and
	\item the presence relation $\pres \subseteq E \times [\tau]$ where $\pres(e, t) \iff (e, t) \in \TE$.
\end{itemize}

A \emph{monadic second-order (MSO) formula} over $\Ss$ 
is a formula that uses
\begin{itemize}
\item the above relations,
\item the logical operators $\land$, $\lor$, $\neg$, $=$, and parentheses,
\item a finite set of variables, each of which is either taken as an element or a subset of $V(\Ss)$, and
\item the quantifiers $\forall$ and $\exists$.
\end{itemize}

Additionally we will use some folklore shortcuts such as $\not=$, $\subseteq$, $\in$, and~$\setminus$, which can themselves be replaced by MSO formulas.

By the following theorem, for any property that can be expressed by an MSO formula,
a minimum subset that satisfies it can be computed in linear time.

\begin{theorem}[{\citet[Thm.~5.6]{ARNBORG1991308}}]
	\label{thm:mso-opt}
	There exists an algorithm that, given
	\begin{itemize}
		\item an MSO formula $\phi$ with free variables $X_1,\dots,X_r$,
		\item an affine function $\alpha(x_1,\dots, x_r)$, and
		\item a graph $G$ together with a tree decomposition of width $w$,
	\end{itemize}
	finds the minimum of~$\alpha(|X_1|,\dots,|X_r|)$ over all~$X_1,\dots,X_r \subseteq V(G)$ for which formula~$\phi$ is satisfied on $G$.
	The running time is~$f(|\phi|,w) \cdot |G|$, where~$|\phi|$ is the length of~$\phi$ and $f$~some computable function.
\end{theorem}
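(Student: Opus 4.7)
The plan is to invoke Theorem~\ref{thm:mso-opt} on the structure $\Ss$ with an MSO formula $\phi(X)$ asserting ``$X$ is a (strict) temporal feedback edge or connection set of $\TG$'' and objective function $\alpha(|X|)=|X|$. Two ingredients are needed: a tree decomposition of $\Ss$ of width bounded in $\tw(G_\downarrow)+\tau$, and a formula $\phi$ whose length depends only on $\tau$. For the treewidth bound, I start from a width-$w$ tree decomposition of $G_\downarrow$ and augment each bag $B$ by (i) every underlying edge with both endpoints in $B$, (ii) every time-edge $(e,t)$ whose underlying edge was just added, and (iii) the entire set $[\tau]$. A direct check of the four edge types in $\Ss$ shows that each edge's endpoints share some augmented bag, yielding a tree decomposition of $\Ss$ of width $O(\tau\cdot w^2)$.

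For the formula, the delicate task is to express ``$\TG - X$ contains a (strict) temporal cycle'' in MSO over $\Ss$. The usual obstacle is the cyclic monotonicity of time labels around the cycle, which seems to require choosing an orientation; I sidestep this by decomposing any hypothetical cycle into a \emph{closing edge} $e^*$ carrying the maximum time $t^*$ plus a residual simple temporal path with linearly non-decreasing times between the two endpoints of $e^*$. The sub-formula $\psi(X)$ therefore existentially quantifies first-order variables $w,u\in V$, $e^*\in E$, $t^*\in[\tau]$ and set variables $V_C\subseteq V$, $E_C\subseteq E$, $E^1,\dots,E^\tau\subseteq E$, and asserts: (i) $(V_C,E_C)$ is a simple cycle of length at least three in $G_\downarrow$ with $e^*=\{w,u\}\in E_C$; (ii) the time-edge $(e^*,t^*)$ lies in $\TE$ and is not forbidden by $X$; (iii) $\{E^t\}_{t\in[\tau]}$ partitions $E_C\setminus\{e^*\}$, each $E^t$ consisting of underlying edges $e$ whose time-edge $(e,t)$ lies in $\TE$ and is not forbidden by $X$, subject to $t\le t^*$ in the non-strict case and $t<t^*$ together with $|E^t|\le 1$ in the strict case; and (iv) for every $t\in[\tau]$, the prefix union $B^t:=\bigcup_{t'\le t}E^{t'}$ is either empty or induces a connected subgraph that contains the vertex $w$. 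The crux is that on the linear path $E_C\setminus\{e^*\}$ from $w$ to $u$, condition (iv) holds iff the time assignment is non-decreasing along the path starting at $w$, which together with (iii) is equivalent to cyclic non-decreasingness around the full cycle. Setting $\phi(X):=[X\subseteq\TE]\land\neg\psi(X)$ yields a formula of length $O(\tau)$; for (S)TFCS one takes $X\subseteq E$ and replaces ``$(e,t)$ is not forbidden by $X$'' with ``$e\notin X$''.

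Applying Theorem~\ref{thm:mso-opt} to $\Ss$ with $\phi$, $\alpha(|X|)=|X|$, and the decomposition constructed above then yields the minimum cardinality of a valid solution in time $f(\tau,\tw(G_\downarrow))\cdot|\Ss|$ for some computable $f$, proving the theorem. The hardest step is the MSO-expressibility of cyclic time-monotonicity; the closing-edge reduction converts it into the cleanly MSO-expressible property ``every prefix sub-level set of a linear path is anchored at a fixed endpoint''. All remaining ingredients (simple cycle of length $\ge 3$, edge-set partitioning, and vertex-anchored connectivity in the incidence-style structure $\Ss$) are routine MSO boilerplate.
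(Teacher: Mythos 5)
You have not proved the stated theorem. The statement in question is \Cref{thm:mso-opt}, the extension of Courcelle's theorem to MSO-definable optimization problems due to \citet{ARNBORG1991308}; the paper imports this as a black box and gives no proof of it (nor is one expected --- it is an external result whose proof requires the machinery of dynamic programming over tree decompositions for MSO-definable predicates, none of which appears in your write-up). What you have written is instead a proof of \Cref{thm:mso}, the \emph{application} of that black box to \textsc{(S)TFES} and \textsc{(S)TFCS}. As a proof of \Cref{thm:mso-opt} itself, the proposal is circular: its very first step is to ``invoke \Cref{thm:mso-opt}.''

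Read as a proof of \Cref{thm:mso}, however, your argument is essentially sound and diverges from the paper's in one interesting place. The auxiliary structure $\Ss$ and the bag augmentation $B'_t = B_t \cup [\tau] \cup (E\cap B_t) \cup \{(e,t) \mid e \in E \cap B_t\}$ are the same as the paper's. For the formula, the paper encodes a temporal cycle by existentially quantifying $\tau$ first-order vertex variables $v_1,\dots,v_\tau$ (one per time step, with repetitions allowed) together with a disjunction over the closing time $t^*$; you instead quantify the cycle as an edge set $E_C$ of $G_\downarrow$ plus a partition $E^1,\dots,E^\tau$ assigning times, and enforce monotonicity by requiring each prefix union $B^t=\bigcup_{t'\le t}E^{t'}$ to be connected and anchored at the endpoint $w$ of the closing edge. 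This prefix-connectivity trick does correctly characterize non-decreasing labels along the residual path (any descent $t_i>t_{i+1}$ separates the component of the later edge from $w$ inside $B^{t_{i+1}}$), and insisting on a \emph{simple} cycle $(V_C,E_C)$ arguably handles the non-strict case more transparently than the paper's vertex-sequence formulation, which must argue that a closed temporal walk contains a temporal cycle as a subsequence. Both formulas have length bounded by a function of $\tau$ (yours is $O(\tau^2)$ if the prefix unions are written out, which is immaterial for fixed-parameter tractability). But none of this addresses \Cref{thm:mso-opt}, whose proof would have to be supplied from the cited literature.
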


\begin{proof}[Proof of \Cref{thm:mso}]
Let $(\TG =(V,\TE,\tau),k)$ be a problem instance of one of our problem variants.
We will construct an MSO formula over the auxiliary graph $\Ss$ defined above
that verifies whether any given set is a solution to this instance.

First, we observe that the treewidth of $\Ss$ is bounded in terms of
$\tw(G_\downarrow) + \tau$.
To this end let $(T, \{B_t\}_{t \in V(T)})$ be an optimal tree decomposition of $G_\downarrow$ where $T$ is a tree and $B_t \subseteq V(G)$ for $t \in V(T)$.
Then $(T, \{B'_t\}_{t \in V(T)})$ with $B'_t := B_t \cup [\tau] \cup (E \cap \binom{B_t}{2}) \cup \{(e, t) \in \TE \mid e \in E \cap \binom{B_t}{2}\}$
is a tree decomposition of $\Ss$ of width at most $\bigO(\tau \cdot
\tw(G_\downarrow)^2)$.
Note that a suitable tree decomposition of $\Ss$ can be computed in $f(\tau +
\tw(G_\downarrow)) \cdot \abs{\Ss}$~time for some function $f$ \cite{Bodlaender96}.

Second, we construct the MSO formulas to express our problem variants.
We do this by assembling them from several auxiliary subformulas encoding simple properties.
We leave it to the reader to verify that each of these formulas agrees with their description.
\newcommand{\tadj}{\operatorname{tadj}}
\newcommand{\cycle}{\operatorname{cycle}}
\newcommand{\conn}{\operatorname{conn}}
\newcommand{\conngraph}{\operatorname{conngraph}}
\newcommand{\tconn}{\operatorname{tconn}}
\newcommand{\ttconn}{\operatorname{ttconn}}
\newcommand{\teelem}{\operatorname{teelem}}
\begin{itemize}
	\item $\tadj(v, w, t)$~tests whether two vertices~$v, w \in V$ are adjacent at time~$t$:
		\[ \tadj(v, w, t) := \exists e \in E : \inc(e,v) \land \inc(e,w) \land \pres(e,t) \]
	\item $\conngraph(X, E')$~tests whether the subgraph $(X, E' \cap X^2)$ of~$G_\downarrow$ is connected:
		\[ \conngraph(X, E') := \forall \emptyset \neq Y \subset X \exists x\in X \setminus Y \exists y \in Y \exists e \in E' : \inc(e,x) \land \inc(e,y) \]
	\item $\conn(v, w, E')$ tests whether the two vertices~$v, w \in V$ are connected by a path that only uses edges from~$E'\subseteq E$:
		\[\conn(v,w,E'):=\exists X\subseteq V: \conngraph(X, E') \land  v\in X\land w\in X\]
	\item $\ttconn(v, w, t, \TE')$ tests whether two vertices~$v, w \in V$ are connected by a path that uses only edges from $\{e \mid (e,t) \in \TE'\}$ for some given $t$ and $\TE' \subseteq \TE$:
		\[ \ttconn(v,w,t,\TE') := \exists E'\subseteq E \forall e\in E'\exists \eps \in \TE': \edge(\eps,e) \land \tim(\eps,t) \land \conn(v,w,E') \]
	\item $\tconn(v, w, t, E')$ tests whether two vertices~$v, w \in V$ are connected by a path that uses only edges from $E' \cap E_t$:
		\[ \tconn(v,w,t,E') := \exists \TE' \subseteq \TE \forall \eps \in \TE' \exists e \in E': \edge(\eps, e) \land \ttconn(v,w,t,\TE')  \]
	\item $\teelem(v, w, t, \TE')$ tests whether $(\{v, w\}, t) \in \TE'$ for some given $\TE' \subseteq \TE$:
		\[ \teelem(v, w, t, \TE') := \exists \eps \in \TE' \exists e \in E : \edge(\eps, e) \land \inc(v, e) \land \inc(w, e) \land \tim(\eps, t) \]	
\end{itemize}
Using these subformulas of constant size,
we can now construct formulas expressing the existence of a temporal cycle for each of our four problem variants as follows.
\begin{itemize}
	\item $\cycle_{\text{SC}}(E')$ tests whether there is a strict temporal cycle using only time-edges whose underlying edges are contained in $E' \subseteq E$: 
		\begin{align*}
			\cycle_{\text{SC}}(E') &:= \exists v_1, v_2, \dots, v_\tau \in V:
			\bigvee_{t^*=1}^{\tau-2} \left(\tadj(v_\tau, v_{t^*},  t^*) 
			 \land \bigwedge_{t=t^*+1}^{\tau}  
			\left( v_t = v_{t-1} \lor (*) \right)\right) \,,
			\intertext{where the subformula}
			(*) &:= \tadj(v_t, v_{t-1}, t) \land \{v_t, v_{t-1}\} \in E' \land \{v_t, v_{t-1}\} \neq \{v_{t^*}, v_\tau\}
		\end{align*}
		tests whether $v_t$ and $v_{t-1}$ are connected at time~$t$ by a time-edge whose underlying edge is contained in $E' \setminus \{\{v_{t^*}, v_\tau\}\}$.
		
		Here a satisfying sequence of vertices $v_\tau, v_{t^*}, v_{t^*+1}, \dots, v_\tau$ (sans repetitions) forms a closed temporal path.
		Note that this sequence might not be a temporal cycle,
		but it will contain a subsequence which does
		because the underlying edge $\{v_{t^*}, v_\tau\}$ is used exactly once.

	\item $\cycle_{\text{C}}(E')$ tests whether there is a non-strict temporal cycle using only time-edges whose underlying edges are contained in $E' \subseteq E$:
		\begin{align*}
			\cycle_{\text{C}}(E') &:= \cycle_{\text{SC}}(E'), \text{ but with $(*)$ replaced by $(**)$}\\
			({*}{*}) &:= \tconn(v_t, v_{t-1}, t, E' \setminus \{\{v_{t^*}, v_\tau\}\})
		\end{align*}
		The subformula~$(**)$ tests whether $v_t$ and $v_{t-1}$ are connected at time~$t$ by a path whose underlying edges are all contained in $E' \setminus \{\{v_{t^*}, v_\tau\}\}$.

	\item $\cycle_{\text{SE}}(\TE')$ tests whether there is a strict temporal cycle using only time-edges in $\TE' \subseteq \TE$:
		\begin{align*}
			\cycle_{\text{SE}}(\TE') &:= \cycle_{\text{SC}}(\TE'), \text{ but with $(*)$ replaced by $(*{*}*)$} \\
			({*}{*}{*}) &:= \teelem(v_t, v_{t-1}, t, \TE') \land \{v_t, v_{t-1}\} \neq \{v_{t^*}, v_\tau\}
		\end{align*}
		The subformula~$(*{*}*)$ tests whether $v_t$ and $v_{t-1}$ are connected at time~$t$ by a time-edge from $\TE' \setminus \{(\{v_{t^*}, v_\tau\}, t)\}$.
	
	\item $\cycle_{\text{E}}(\TE')$ tests whether there is a non-strict temporal cycle using only time-edges in $\TE' \subseteq \TE$:
		\begin{align*}
			\cycle_{\text{E}}(\TE') &:= \cycle_{\text{SC}}(\TE'), \text{ but with $(*)$ replaced by $({**}{**})$} \\
			({*}{*}{*}{*}) &:= \ttconn(v_t, v_{t-1}, t, \TE' \setminus \{(\{v_{t^*}, v_\tau\}, t\})
		\end{align*}
		The subformula~$({**}{**})$ tests whether $v_t$ and $v_{t-1}$ are connected at time~$t$ by a sequence of time-edges from $\TE' \setminus \{(\{v_{t^*}, v_\tau\}, t\}$.
\end{itemize}
It is easy to check that the sizes of the formulas are in $\bigO(\tau^2)$. 
Based on these formulas, we can now give formulas that check whether $\TE'$~is a (strict) temporal feedback edge set, 
respectively whether $E'$~is a (strict) temporal feedback connection set:
\begin{align*}
\phi_{\text{(S)TFES}}(\TE') &:=  \neg
\cycle_{\text{(S)E}}(\TE\setminus\TE')
\\ \phi_{\text{(S)TFCS}}(E') &:= \neg \cycle_{\text{(S)C}}(E\setminus
E')
\end{align*}
The result now follows from \Cref{thm:mso-opt} (for $\alpha(x)=x$)
since $\abs{\Ss} \in \bigO(\tau + \abs{\TG})$, $\tw(\Ss) \in
\bigO(\tau \cdot \tw(G_\downarrow)^2)$, and
$\abs{\phi_{\text{(S)TFES}}},\abs{\phi_{\text{(S)TFCS}}}\in\bigO(\tau^2)$.
\end{proof}

\section{Conclusion}
\label{sec:conclusion}
We investigated the parameterized computational complexity of the
problem of removing edges from a temporal graph to destroy all temporal cycles.
We showed NP-hardness even for temporal graphs with constant lifetime and
W[1]-hardness for the solution size parameter. On the positive side, our main
results are fixed-parameter tractability for the parameter ``number of
vertices'' and the treewidth of the underlying graph combined with the lifetime.

We conclude with some challenges for future research.
For the parameter lifetime~$\tau$, it remains open whether there exists a
polynomial-time algorithm for instances with $3 \leq \tau \leq 7$ in the strict
case and~$\tau=2$ in the non-strict case. We believe that, for the strict case,
our 3-SAT reduction %
can be modified to use only seven time-labels. %
Similarly to 
the work of \citet{ZschocheFMN18} in the context of temporal separators, we
could not resolve the question whether the non-strict variants are
fixed-parameter tractable for the combined parameter $\tau+k$, whereas for the
strict case, this is almost trivial.

We further leave as a future research challenge to investigate whether our
fixed-parameter tractability result for the parameter ``number of vertices'' can be improved: On the one hand, we would like to
improve the running time of the algorithm or show some conditional running time
lower bound to show that it likely cannot be improved significantly. On the
other hand, we leave open whether it is possible to obtain a polynomial-size problem kernel
for the number of vertices as a parameter.

Additionally, it seems natural to study (S)TFES and (S)TFCS variants restricted
to specific temporal graph classes (e.g., see \citet{Fluschnik19}). In
particular, we could not settle the parameterized complexity of our problem
variants when parameterized by (solely) the treewidth of the underlying graph.

Moreover, we remark that we focused on finding feedback edge sets, 
ignoring the presumably harder vertex variant;
however, one can observe that 
our W[1]-hardness result also transfers to the problem of finding feedback vertex sets
in temporal graphs.

Finally, we are interested in the possibilities to use the temporal feedback edge set size as a parameter for other temporal graph problems.
In order to design FPT-algorithms with this parameter it might be necessary to compute a solution to (S)TFES efficiently.
Since our hardness results refute efficient exact algorithms, we would like to obtain polynomial-time constant-factor approximation algorithms for our problems.

\paragraph*{Acknowledgments}
We would like to thank the anonymous reviewers for their careful checking of the manuscript and their valuable feedback.

\bibliography{strings-long,bibfile}

\end{document}